\newtheorem{theorem}{Theorem}
\newtheorem{definition}{Definition}
\newtheorem{proposition}{Proposition}
\newtheorem{redrule}{Reduction Rule}
\newtheorem{clm}{Claim}[theorem]
\theoremstyle{definition}
\newcommand{\cl}{{\sf cl}}
\DeclareMathOperator{\operatorClassNP}{NP}
\newcommand{\classNP}{\ensuremath{\operatorClassNP}\xspace}
\DeclareMathOperator{\operatorClassCoNP}{coNP}
\newcommand{\classCoNP}{\ensuremath{\operatorClassCoNP}}
\DeclareMathOperator{\operatorClassFPT}{FPT\xspace}
\newcommand{\classFPT}{\ensuremath{\operatorClassFPT}\xspace}
\DeclareMathOperator{\operatorClassW}{W}
\newcommand{\classW}[1]{\ensuremath{\operatorClassW[#1]}}
\newlength{\RoundedBoxWidth}
\newsavebox{\GrayRoundedBox}
\newenvironment{GrayBox}[1]%
   {\setlength{\RoundedBoxWidth}{.93\textwidth}
    \def\boxheading{#1}
    \begin{lrbox}{\GrayRoundedBox}
       \begin{minipage}{\RoundedBoxWidth}}%
   {   \end{minipage}
    \end{lrbox}
    \begin{center}
    \begin{tikzpicture}%
       \node(Text)[draw=black!20,fill=white,rounded corners,%
             inner sep=2ex,text width=\RoundedBoxWidth]%
             {\usebox{\GrayRoundedBox}};
        \coordinate(x) at (current bounding box.north west);
        \node [draw=white,rectangle,inner sep=3pt,anchor=north west,fill=white] 
        at ($(x)+(6pt,.75em)$) {\boxheading};
    \end{tikzpicture}
    \end{center}}
\newenvironment{defproblemx}[2][]{\noindent\ignorespaces%
                                \FrameSep=6pt%
                                \parindent=0pt%
                \vspace*{-1.5em}
                \ifthenelse{\isempty{#1}}{%
                  \begin{GrayBox}{\textsc{#2}}%
                }{%
                  \begin{GrayBox}{\textsc{#2} parameterized by~{#1}}%
                }
                \begin{tabular*}{\textwidth}{@{\hspace{.1em}} >{\itshape} p{1.8cm} p{0.8\textwidth} @{}}%
            }{
                \end{tabular*}%
                \end{GrayBox}%
                \ignorespacesafterend
            }
\newcommand{\defproblema}[3]{
  \begin{defproblemx}{#1}
    Input:  & #2 \\
    Task: & #3
  \end{defproblemx}
}%
\newcommand{\Oh}{\mathcal{O}}
\newcommand{\cM}{\mathcal{M}}
\newcommand{\cI}{\mathcal{I}}
\newcommand{\pname}{\textsc}
\newcommand{\ProblemFormat}[1]{\pname{#1}}
\newcommand{\ProblemIndex}[1]{\index{problem!\ProblemFormat{#1}}}
\newcommand{\ProblemName}[1]{\ProblemFormat{#1}\ProblemIndex{#1}{}\xspace}
\newcommand{\probStab}{\ProblemName{Independent Stable Set}}
\newcommand{\probIS}{\ProblemName{Stable Set}}
\newcommand{\probMIS}{\ProblemName{Rainbow-Stable Set}}
\begin{document}

\title{Stability in Graphs with Matroid Constraints\thanks{The research leading to these results has received funding from the Research Council of Norway via the project  BWCA
(grant no. 314528) and the European Research Council (ERC) via grant LOPPRE, reference 819416.}
}

\author{
Fedor V. Fomin\thanks{
Department of Informatics, University of Bergen, Norway.}
\and
Petr A. Golovach\addtocounter{footnote}{-1}\footnotemark{}
\and
Tuukka Korhonen\addtocounter{footnote}{-1}\footnotemark{}
\and
Saket Saurabh\addtocounter{footnote}{-1}\footnotemark{} \thanks{The Institute of Mathematical Sciences, HBNI, Chennai, India}
}

\date{}

\maketitle

\begin{abstract}
We study the following \probStab problem. Let $G$ be an undirected graph and $\cM = (V(G), \mathcal{I})$ be a matroid whose elements are the vertices of $G$.
For an integer $k\geq 1$, the task is to decide whether $G$ contains a set $S\subseteq V(G)$ of size at least $k$ which is independent (stable) in $G$ and independent in $\cM$. This problem generalizes several well-studied algorithmic problems, including  \textsc{Rainbow Independent Set}, 
\textsc{Rainbow Matching}, and  \textsc{Bipartite Matching with Separation}.  
We show that 
\begin{itemize}
\item 
When the matroid $\cM$ is represented by the independence oracle, then for any computable function $f$, no algorithm can solve \probStab using $f(k) \cdot n^{o(k)}$ calls to the oracle.
 \item
On the other hand, when the graph $G$ is of degeneracy $d$, then the problem is solvable in time $\Oh((d+1)^k \cdot n)$, and hence is  \classFPT parameterized by $d+k$. Moreover, when the degeneracy $d$ is a constant (which is not a part of the input), the problem admits a kernel polynomial in $k$. 
 More precisely, we prove that  
for every integer $d\geq 0$,  the problem admits a kernelization algorithm that in time $n^{\Oh(d)}$  outputs an equivalent framework with a  graph on $dk^{\Oh(d)}$ vertices. A lower bound complements this when $d$ is part of the input:   \probStab does not admit a polynomial kernel when parameterized by $k+d$ unless $\classNP \subseteq \classCoNP/{\rm poly}$. This lower bound holds even when $\cM$ is a partition matroid. 
\item  Another set of results concerns the scenario when the graph $G$ is chordal. In this case, our computational lower bound excludes an \classFPT algorithm when the input matroid is given by its independence oracle. However,  we demonstrate that \probStab can be solved in $2^{\Oh(k)}\cdot \|\cM\|^{\Oh(1)}$ time when $\cM$ is a linear matroid given by its representation. In the same setting, \probStab does not have a polynomial kernel when parameterized by $k$ unless $\classNP\subseteq\classCoNP/{\rm poly}$.
\end{itemize}
\end{abstract}

\section{Introduction}\label{sec:intro}
We initiate the algorithmic study of computing stable (independent) sets in frameworks. The term \emph{framework}, also known as \emph{pregeometric graph} \cite{Lovasz77,Lovasz19},  refers to a pair $(G, \cM)$, where $G$ is a graph and $\cM = (V(G), \mathcal{I})$ is a matroid on the vertex set of $G$.
We remind the reader that pairwise nonadjacent vertices of a graph form a \emph{stable} or \emph{independent} set. To avoid confusion with independence in matroids, we consistently use the term "stable set" throughout the paper. Whenever we mention independence, it is in reference to independence with respect to a matroid. We consider the following problem.

 \defproblema{\probStab}%
{A framework $(G,\cM)$ and an integer $k\geq 0$.}%
{Decide whether $G$ has vertex set $S\subseteq V(G)$ of size at least $k$ that is stable in $G$ and independent in  $\cM$.
}

The \probStab problem encompasses several well-studied problems related to stable sets.

When $\cM$ is a uniform matroid with every $k$-element subset of $V(G)$ forming a basis, the \probStab problem seeks to determine whether a graph $G$ contains a stable set of size at least $k$. This is the classic  \probIS problem. 

For a partition matroid $\cM$ whose elements are partitioned into $k$ blocks and independent sets containing at most one element from each block, \probStab transforms into the rainbow-independence (or \probMIS) problem. To express this problem in graph terminology, consider a graph $G$ with a vertex set $V(G)$ colored in $k$ colors. A set of vertices $S$ is termed \emph{rainbow-independent} if it is stable in $G$ and no color occurs in $S$ more than once~\cite{aharoni2023rainbow, kim2022rainbow}. This concept is also known in the literature as an \emph{independent transversal} \cite{GrafH20, GrafHH22, haxell2011forming} and an independent system of representatives \cite{aharoni2007independent}. 

Rainbow-independence generalizes the well-studied combinatorial concept of rainbow matchings \cite{AharoniBCHS19, DRISKO1998181}. (Note that a matching in a graph is a stable set in its line graph.) It also has a long history of algorithmic studies. In the \textsc{Rainbow Matching} problem, we are given a graph $G$, whose edges are colored in $q$ colors, and a positive integer $k$. The task is to decide whether a matching of size at least $k$ exists whose edges are colored in distinct colors.
 Itai,  Rodeh,  and Tanimoto in \cite{ItaiRT78}  established that \textsc{Rainbow Matching} is \classNP-complete on bipartite graphs.  Le and Pfender \cite{LeP14} strongly enhanced this result by showing that  \textsc{Rainbow Matching}  is \classNP-complete even on paths and complete graphs.
Gupta et al. \cite{GuptaRSZ19} considered the parameterized complexity of  \textsc{Rainbow Matching}. 
They gave an \classFPT  algorithm of running time $2^k \cdot n^{\Oh(1)}$. They also provided a kernel with  $\Oh(k^2 \Delta)$ vertices, where $\Delta$ is the maximum degree of a graph. Later, in 
\cite{GuptaRSZ20}, the same set of authors obtained a kernel with   $\Oh(k^2)$ vertices
for  \textsc{Rainbow Matching} on general graphs. 

When $\cM$ is a transversal matroid, \probStab transforms into the \textsc{Bipartite Matching with Separation} problem~\cite{ManurangsiSS23}. In this variant of the maximum matching problem, the goal is to determine whether a bipartite graph $H$ contains a matching of size $k$ with a separation constraint: the vertices on one side lie on a path (or a grid), and two adjacent vertices on a path (or a grid) are not allowed to be matched simultaneously. This problem corresponds to \probStab on a framework $(G, \cM)$, where $G$ is a path (or a grid) on vertices $U$, and $\cM$ is a transversal matroid of the bipartite graph $H = (U, W, E_H)$ whose elements are $U$, and the independent subsets are sets of endpoints of matchings of $H$. Manurangsi,  Segal-Halevi, and  Suksompong in  \cite{ManurangsiSS23} 
proved that \textsc{Bipartite Matching with Separation} is \classNP-complete and provided approximation algorithms.

 \probIS is a notoriously difficult computational problem. It is well-known to be \classNP-complete and \classW{1}-complete when parameterized by $k$~\cite{CyganFKLMPPS15}. On the other hand,  \probIS is solvable in polynomial time on perfect graphs \cite{grotschel2012geometric}. When it comes to parameterized algorithms and kernelization,  \probIS is known to be \classFPT and to admit polynomial (in $k$) kernel on classes of sparse graphs, like graphs of bounded degree or degeneracy \cite{CyganGH17}. The natural question is which algorithmic results about the stable set problem could be extended to \probStab.

\begin{itemize}
\item 
We commence with a lower bound on \probStab.  \Cref{thm:lb-uncond} establishes that when the matroid in a framework is represented by the independence oracle, for any computable function $f$, no algorithm can solve \probStab using $f(k) \cdot n^{o(k)}$ calls to the oracle. Moreover, we show that the lower bound holds for frameworks with bipartite, chordal, claw-free graphs, and AT-free graphs for which the classical \probIS problem can be solved in polynomial time. 
While the usual bounds in parameterized complexity are based on the assumption $\classFPT\neq\classW{1}$,  \Cref{thm:lb-uncond} 
rules out the existence of an \classFPT algorithm for \probStab parameterized by $k$ unconditionally.

\item
In \Cref{sec:deg}, we delve into the parameterized complexity of \probStab when dealing with frameworks on $d$-degenerate graphs. The first result of this section, \Cref{thm:fpt-degen}, demonstrates that the problem is \classFPT when parameterized by $d+k$, by providing an algorithm of running time  $\Oh((d+1)^k \cdot n)$. Addressing the kernelization aspect,  \Cref{thm:kern-degen} reveals that when $d$ is a constant, \probStab on frameworks with graphs of degeneracy at most $d$, admits a kernel polynomial in $k$. 
More precisely, we prove that  
for every integer $d\geq 0$,  the problem admits a kernelization algorithm that in time $n^{\Oh(d)}$  outputs an equivalent framework with a  graph on $dk^{\Oh(d)}$ vertices. This is complemented by  \Cref{thm:nokern-degen}, establishing that \probStab on frameworks with $d$-degenerate graphs and partition matroids lacks a polynomial kernel when parameterized by $k+d$ unless $\classNP \subseteq \classCoNP/{\rm poly}$.

Shifting the focus to the stronger maximum vertex degree $\Delta$ parameterization, \Cref{thm:kern-degree} establishes improved kernelization bounds. Specifically, \probStab admits a polynomial kernel on frameworks that outputs an equivalent framework with a graph on at most $k^2\Delta$ vertices. 

\item  When it comes to perfect graphs, there is no hope of polynomial or even parameterized algorithms with parameter $k$: 
\probMIS is already known to be \classNP-complete and \classW{1}-complete when parameterized by $k$ on bipartite graphs  by the straightforward reduction from the dual \textsc{Mulitcolored Biclique} problem~\cite{CyganFKLMPPS15}.
Also, the unconditional lower bound from \Cref{thm:lb-uncond} holds for bipartite and chordal graphs if the input matroids are given by the independence oracles.

Interestingly, it is still possible to design \classFPT algorithms for frameworks with chordal graphs when the input matroids are given by their representations. 
In \Cref{thm:FPT-chordal}, we show that 
\probStab can be solved in $2^{\Oh(k)}\cdot \|A\|^{\Oh(1)}$ time by a one-sided error Monte Carlo algorithm with false negatives on frameworks with chordal graphs and linear matroids given by their representations $A$. When it concerns kernelization, 
\Cref{thm:nokern-chord} shows that 
\probStab on frameworks with chordal graphs and partition matroids does not admit a polynomial kernel when parameterized by $k$ unless $\classNP\subseteq\classCoNP/{\rm poly}$.

\end{itemize}

\section{Preliminaries}\label{sec:prelim} 
In this section, we introduce the basic notation used throughout the paper and provide some auxiliary results.

\paragraph{Graphs.} We use standard graph-theoretic terminology and refer to the textbook of Diestel~\cite{Diestel12} for missing notions. 
 We consider only finite undirected graphs.  For a graph $G$,  $V(G)$ and $E(G)$ are used to denote its vertex and edge sets, respectively. Throughout the paper, we use $n$ to denote the number of vertices if it does not create confusion.  
 For a graph $G$ and a subset $X\subseteq V(G)$ of vertices, we write $G[X]$ to denote the subgraph of $G$ induced by $X$. 
 We denote by  $G-X$ the graph obtained from $G$ by the deletion of every vertex of $X$ (together with incident edges).   
For $v\in V(G)$, we use $N_G(v)$ to denote the \emph{(open) neighborhood} of $v$, that is, the set of vertices of $G$ that are adjacent to $v$; $N_G[v]=N_G(v)\cup\{v\}$ is the \emph{closed neighborhood} of $v$. For a set of vertices $X$, $N_G(X)=\big(\bigcup_{v\in X}N_G(v)\big)\setminus X$ and $N_G[X]=\bigcup_{v\in X}N_G[v]$. We use $d_G(v)=|N_G(v)|$ to denote the \emph{degree} of $v$; $\delta(G)$ and $\Delta(G)$ denote the minimum and maximum degree of a vertex in $G$, respectively. For a nonnegative integer $d$, $G$ is \emph{$d$-degenerate} if for every subgraph $H$ of $G$, $\delta(H)\leq d$.  Equivalently, a  graph $G$ is $d$-degenerate if there is an ordering $v_1,\ldots,v_n$ of the vertices of $G$, called \emph{elimination ordering}, such that $d_{G_i}(v_i)\leq d$ for every $i\in\{1,\ldots,n\}$ where $G_i=G[\{v_i,\ldots,v_n\}]$. Given a $d$-degenerate graph $G$, the elimination ordering can be computed in linear time~\cite{MatulaB83}. 
The \emph{degeneracy} of $G$ is the minimum $d$ such that $G$ is $d$-degenerate. We remind that a graph $G$ is \emph{bipartite} if its vertex set can be partitioned into two sets $V_1$ and $V_2$ in such a way that each edge has one endpoint in $V_1$ and one endpoint in $V_2$. A graph $G$ is \emph{chordal} if it has no induced cycles on at least four vertices. 
A graph $G$ is said to be \emph{claw-free} if it does not contain the \emph{claw} graph $K_{1,3}$ as an induced subgraph. An independent set of three vertices such that each pair can be joined by a path avoiding the neighborhood of the third is called an \emph{asteroidal triple} (AT). A graph is \emph{AT-free} if it does not contain asteroidal triples.

\paragraph{Matroids.}  We refer to the textbook of Oxley~\cite{Oxley11} for the introduction to Matroid Theory. Here we only briefly introduce the most important notions.
\begin{definition}\label{def:matroid}
A pair $\cM=(V,\cI)$, where $V$ is a \emph{ground set} and $\cI$ is a family of subsets, called \emph{independent sets of $\cM$}, is a \emph{matroid} if it satisfies the following conditions, called \emph{independence axioms}:
\begin{itemize}
\item[(I1)]  $\emptyset\in \cI$, 
\item[(I2)]  if $X \subseteq Y $ and $Y\in \cI$ then $X\in\cI$, 
\item[(I3)] if $X,Y  \in \cI$  and $ |X| < |Y| $, then there is $v\in  Y \setminus X $  such that $X\cup\{v\} \in \cI$.
\end{itemize}
\end{definition}
We use $V(\cM)$ and $\cI(\cM)$ to denote the ground set and the family of independent sets of $\cM$, respectively, unless $\cM$ is clear from the context.   
An inclusion-maximal set of $\cI$ is called a \emph{base}; it is well-known that all bases of $\cM$ have the same cardinality.
A function $r\colon 2^V\rightarrow \mathbb{Z}_{\geq 0}$  such that for every $X\subseteq V$,
\begin{equation*}
r(X)=\max\{|Y|\colon Y\subseteq X\text{ and }Y\in \cI\}
\end{equation*}
is called the \emph{rank function} of $\cM$. The \emph{rank of $\cM$}, denoted $r(\cM)$, is $r(V)$; equivalently, the rank of $\cM$ is the size of any base of $\cM$.  
Let us remind that a set $X\subseteq V$ is independent if and only if $r(X)=|X|$. 
The \emph{closure} of a set $X$ is the set $\cl(X)=\{v\in V\colon r(X\cup\{v\})=r(X)\}$.
The matroid $\cM'=(V\setminus X,\cI')$, where $\cI'=\{Y\in \cI\colon Y\subseteq V\setminus X\}$, is said to be obtained from $\cM$ by the \emph{deletion} of $X$. The \emph{restriction}  of $\cM$ to $X\subseteq V$ is the matroid obtained from $\cM$ by the deletion of $V\setminus X$.  
If $X$ is an independent set then the matroid $\cM''=(V\setminus X,\cI'')$, where 
$\cI''=\{Y\subseteq V\setminus X \colon Y\cup X\in\cI\}$, is the \emph{contraction} of  $\cM$ by $X$. 
For a positive integer $k$, the \emph{$k$-truncation} of $\cM=(V,\cI)$ is the matroid $\cM'$ with the same ground set $V$ such that $X\subseteq V$ is independent in $\cM'$ if and only if $X\in \cI$ and $|X|\leq k$. Because in \probStab, we are interested only in independent sets of size at most $k$, we assume throughout our paper that the rank of the input matroids is upper bounded by $k$. Otherwise, we replace $\cM$ by its $k$-truncation. 

In our paper,   we assume in the majority of our algorithmic results that the input matroids in instances of \probStab are given by independence oracles. An \emph{independence oracle} for $\cM$ takes as its input a set $X\subseteq V$ and correctly returns either {\sf yes} or {\sf no}  in unit time depending on whether  $X$ is independent or not. 
We assume that the memory used to store oracles does not contribute to the input size; this is important for our kernelization results. 
Notice that given an independence oracle, we can greedily construct an inclusion-maximal independent subset of $X$ and this can be done in $\Oh(|X|)$ time.  Note also that the oracle for $\cM$ can be trivially transformed to an oracle for the $k$-truncation of $\cM$.

Our computational lower bounds, except the unconditional bound in \Cref{thm:lb-uncond}, are established for partition matroids. The \emph{partition} matroid for a given partition $\{V_1,\ldots,V_\ell\}$ of $V$ is the matroid with the ground set $V$ such that a set $X\subseteq V$ is independent if and only if $|X\cap V_i|\leq 1$ for each $i\in\{1,\ldots,\ell\}$ (in a more general setting, it is required that $|V\cap X_i|\leq d_i$ where $d_1,\ldots,d_\ell$ are some constant but we only consider the case $d_1=\dots=d_\ell=1$). 

Matroids also could be given by their representations. Let  $\cM=(V,\cI)$ be a matroid and let $\mathbb{F}$ be a field. An $r\times n$-matrix $A$ is a \emph{representation of $\cM$ over $\mathbb{F}$} if there is a bijective correspondence $f$ between $V$ and the set of columns of $A$ such that for every $X\subseteq V$, $X\in \cI$ if and only if the set of columns $f(X)$ consists of linearly independent vectors of $\mathbb{F}^r$. Equivalently, $A$ is a representation of $M$ if $M$ is isomorphic to the \emph{column} matroid of $A$, that is, the matroid whose ground set is the set of columns of the matrix and the independence of a set of columns is defined as the linear independence.   
If $\cM$ has a  such a representation, then $\cM$ is \emph{representable} over $\mathbb{F}$ and it is also said
$M$ is a \emph{linear} (or \emph{$\mathbb{F}$-linear}) matroid. 

\paragraph{Parameterized Complexity.} We refer to the books of Cygan et al.~\cite{CyganFKLMPPS15} and Fomin et al.~\cite{fomin2019kernelization}
for an introduction to the area. Here we only briefly mention the notions that are most important to state our results.  A \emph{parameterized problem} is a language $L\subseteq\Sigma^*\times\mathbb{N}$  where $\Sigma^*$ is a set of strings over a finite alphabet $\Sigma$. An input of a parameterized problem is a pair $(x,k)$ where $x$ is a string over $\Sigma$ and $k\in \mathbb{N}$ is a \emph{parameter}. 
A parameterized problem is \emph{fixed-parameter tractable} (or \classFPT) if it can be solved in time $f(k)\cdot |x|^{\mathcal{O}(1)}$ for some computable function~$f$.  
The complexity class \classFPT contains all fixed-parameter tractable parameterized problems.
A \emph{kernelization algorithm} or \emph{kernel} for a parameterized problem $L$ is a polynomial-time algorithm that takes as its input an instance $(x,k)$ of $L$ and returns an instance $(x',k')$ of the same problem such that (i) $(x,k)\in L$ if and only if $(x',k')\in L$ and (ii) $|x'|+k'\leq f(k)$ for some computable function $f\colon \mathbb{N}\rightarrow \mathbb{N}$. The function $f$ is the \emph{size} of the kernel; a kernel is \emph{polynomial} if $f$ is a polynomial. While every decidable parameterized problem is \classFPT if and only if the problem admits a kernel, it is unlikely that all \classFPT problems have polynomial kernels.  In particular, the \emph{cross-composition} technique proposed by Bodlaender, Jansen, and Kratsch~\cite{BodlaenderJK14} could be used to prove that a certain parameterized problem does not admit a polynomial kernel unless $\classNP\subseteq\classCoNP/{\rm poly}$.

We conclude the section by defining \probMIS. 

 \defproblema{\probMIS}%
{A graph $G$ and a partition $\{V_1,\ldots,V_k\}$ of $V(G)$ into $k$ sets, called color classes.}%
{Decide whether $G$ has a stable set $S$ of size  $k$ such that $|S\cap V_i|=1$ for each $i\in\{1,\ldots,k\}$.
}

As mentioned,  \probMIS is a special case of \probStab for partition matroids where $k$ is the number of subsets in the partition defining the input matroid.

\section{Unconditional computational lower bound}\label{sec:lower-uncond}
Because \probStab generalizes the classical \probIS problem, \probStab is \classNP-complete~\cite{GareyJ79} and \classW{1}-complete~\cite{DowneyF13}. However, when the input matroids are given by their independence oracles, we obtain an unconditional computational lower bound. 
Moreover, we show that the lower bound holds for several graph classes for which the classical \probIS problem can be solved in polynomial time. For this, we remind that \probIS is polynomial on claw-free and AT-free graphs by the results of Minty~\cite{Minty80} and  Broersma et al.~\cite{BroersmaKKM99}, respectively. 

\begin{theorem}\label{thm:lb-uncond}
There is no algorithm solving \probStab for frameworks with matroids represented by the independence oracles using $f(k)\cdot n^{o(k)}$ oracle calls for any computable function $f$.
Furthermore, the bound holds for bipartite, chordal, claw-free, and  AT-free graphs.
\end{theorem}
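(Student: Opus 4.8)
The plan is to prove the oracle lower bound by an adversary (information-theoretic) argument, exploiting the fact that an algorithm making few oracle calls cannot distinguish between a matroid that "hides" a valid solution and one that does not. First I would fix the graph $G$ to be a disjoint union of $k$ cliques $Q_1,\dots,Q_k$, each on $N$ vertices, so that $n = kN$. In such a graph, a stable set of size $k$ is exactly a transversal picking one vertex from each clique, i.e. a tuple $(v_1,\dots,v_k)$ with $v_i \in Q_i$; there are $N^k$ such candidate sets. Note this $G$ is simultaneously bipartite is false — but it is chordal, claw-free, and AT-free (a disjoint union of cliques), which already covers three of the four classes; for the bipartite case I would instead take $G$ to be a disjoint union of $k$ complete bipartite graphs $K_{N,N}$ (so a stable set of size $k$ is again a transversal, now choosing one side per block, and the number of "maximal" candidates is still $\sim 2^k N^k$, enough). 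In all cases, the key combinatorial feature is: there are roughly $N^{\Theta(k)}$ distinct sets that are stable in $G$ and have size $k$.

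The core of the argument is the choice of matroid. For a chosen transversal $T = \{v_1,\dots,v_k\}$, I would let $\cM_T$ be the matroid on $V(G)$ defined so that a set $X$ is independent iff either $|X| < k$, or $|X| = k$ and $X = T$ (and no set of size $> k$ is independent); one checks this satisfies (I1)–(I3) — it is the $k$-truncation of the matroid whose only basis is $T$ together with... actually the cleanest description: independent sets are all sets of size $\le k-1$, plus the single set $T$. This is a matroid (a "truncated free-then-one-basis" construction; verifying the exchange axiom (I3) is routine since for $|X|<|Y|\le k$ any $Y$ of size $<k$ gives an element, and if $|Y|=k$ then $Y=T$ and $|X|\le k-1$ so some element of $T\setminus X$ extends $X$). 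Then the framework $(G,\cM_T)$ is a yes-instance, and its unique solution is $T$. Separately, let $\cM_\bot$ be the uniform matroid $U_{k-1,n}$ (independent sets = all sets of size $\le k-1$); then $(G,\cM_\bot)$ is a no-instance. Crucially, the oracle for $\cM_T$ and the oracle for $\cM_\bot$ give identical answers on every query $X$ with $X \ne T$: they differ only on the single query $X = T$.

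Now I would run the standard adversary argument. Suppose an algorithm solves \probStab with at most $f(k)\cdot n^{o(k)}$ oracle calls. Run it on the oracle for $\cM_\bot$; it makes a set $\mathcal{Q}$ of at most $f(k)\cdot n^{o(k)}$ queries and must answer "no". Since $n = kN$ and there are $N^{\Omega(k)}$ candidate transversals $T$, and $f(k)\cdot (kN)^{o(k)} = f(k)\cdot k^{o(k)} \cdot N^{o(k)} \ll N^{\Omega(k)}$ for $N$ large, there exists a transversal $T^\star$ that was never queried, i.e. $T^\star \notin \mathcal{Q}$. Then on input $(G, \cM_{T^\star})$ the algorithm makes exactly the same queries (by induction on the step count: as long as all previous answers agreed, the next query is the same, and the only query on which the two oracles disagree, namely $T^\star$, is never reached) and returns "no" — but $(G,\cM_{T^\star})$ is a yes-instance, a contradiction. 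This kills even $f(k)\cdot n^{o(k)}$, and works verbatim for the disjoint-union-of-cliques graph (chordal, claw-free, AT-free) and, with the $K_{N,N}$ variant, for bipartite graphs.

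**Main obstacle.** The one step requiring genuine care is verifying that the "$\mathcal{I}_T = \{X : |X| \le k-1\} \cup \{T\}$" construction is actually a matroid — in particular checking (I3) when $|Y| = k$ (so $Y = T$) against an arbitrary $X$ with $|X| = k-1$; here one needs $T \setminus X \ne \emptyset$, which holds since $|T| = k > k-1 = |X|$, and $X \cup \{v\}$ for $v \in T\setminus X$ has size $k$ but need not equal $T$ unless $X \subsetneq T$ — so in fact this naive family is \emph{not} a matroid. The fix (which I would actually carry out) is to instead take $\cM_T$ to be the partition/transversal-type matroid that makes $T$ \emph{a} basis but not the unique one in a way that still isolates it as the unique \emph{stable} solution: e.g. let $\cM_T$ be the matroid of rank $k$ in which $X$ is independent iff $|X \cap Q_i| \le 1$ for all $i$ after relabelling so that $T$'s blocks... — more robustly, use the transversal matroid of a bipartite graph designed so that exactly the sets hitting each $Q_i$ once are independent, which is just a partition matroid, but then \emph{every} transversal is a solution and the adversary argument breaks. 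So the real resolution, and the crux of the proof, is to design a family $\{\cM_T\}$ of matroids (given by oracles) that (a) are genuinely matroids, (b) make $(G,\cM_T)$ a yes-instance with $T$ a solution, and (c) have oracles pairwise "close" to a common no-instance oracle $\cM_\bot$ in the sense that few coordinates differ — one clean option is to let $\cM_T$ be obtained from $U_{k-1,n}$ by a single "elementary lift" that adds $T$ as the unique new basis (the truncation to rank $k$ of the free-est lift keeping $V(G)\setminus T$ dependent beyond rank $k-1$); proving this is a matroid and that its oracle differs from that of $U_{k-1,n}$ only on supersets of... is the technical heart, and I expect the bulk of the author's proof to be spent exactly there.
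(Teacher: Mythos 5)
Your adversary framework---a no-instance oracle $\cM_\bot$ agreeing with each yes-instance oracle $\cM_T$ on all but very few queries, combined with a counting argument over $N^{\Theta(k)}$ hidden candidates---is exactly the right shape, and you correctly spot that the naive family $\{X:|X|\le k-1\}\cup\{T\}$ violates the exchange axiom. But the proposal stops precisely where the real work begins: you never exhibit a valid matroid family and you concede that constructing one is ``the technical heart.'' That is a genuine gap, not a deferred routine detail. Moreover, the fixes you gesture at cannot work with your chosen graph. The construction that does work inverts your logic: instead of making $T$ the \emph{only} independent $k$-set, one makes the \emph{dependent} $k$-sets be exactly the candidate stable sets other than the hidden set $W$; every other $k$-set (in particular every $k$-set exhibiting a ``local violation'' of stability) stays independent, so querying it reveals nothing. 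For this to satisfy (I3)---i.e., to be a sparse paving matroid of rank $k$---the dependent $k$-sets must pairwise intersect in at most $k-2$ elements. With your disjoint union of $k$ cliques, two candidate transversals can differ in a single vertex, and the exchange axiom genuinely fails: with $k=2$, cliques $\{a_1,b_1\}$ and $\{a_2,b_2\}$, and $W=\{a_1,a_2\}$, the sets $\{a_1,b_2\}$, $\{b_1,a_2\}$, $\{b_1,b_2\}$ are all dependent, so for $X=\{b_1\}$ and the independent set $Y=\{a_2,b_2\}$ no element of $Y\setminus X$ extends $X$.

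This is why the paper also redesigns the graph: each block is $K_{2q}$ minus a perfect matching, so every candidate stable set must pick a \emph{pair} $\{a_{i,j},b_{i,j}\}$ from each block, and two distinct candidates automatically differ in at least two elements. The family ``all $k$-sets except the candidates other than $W$'' is then a legitimate matroid (the paper verifies (I1)--(I3) directly), the oracle gives information only on the $q^p$ candidate queries, and the $q^p-1$ lower bound follows by your indistinguishability argument. That graph is claw-free and AT-free; for the bipartite and chordal cases the paper adds a third family of selector vertices $c_{i,j}$ per block and takes the union of the paving-type matroid with a partition matroid. In short, to repair your proof you need both the inverted matroid (dependent sets are the non-hidden candidates) and a graph in which candidate solutions pairwise differ in at least two vertices; these two design choices are coupled, and neither is present in your proposal.
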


\begin{proof}
First, we show the bound for claw-free and AT-free and then explain how to modify the proof for other graph classes.

Let $p$ and $q$ be positive integers. We define the graph $G_{p,q}$ as the disjoint union of $G_i$ constructed as follows for each $i\in\{1,\ldots,p\}$.
\begin{itemize}
\item For each $j\in\{1,\ldots,q\}$, construct two vertices $a_{i,j}$ and $b_{i,j}$; set $A_i=\{a_{i,1},\ldots,a_{i,q}\}$ and  $B_i=\{b_{i,1},\ldots,b_{i,q}\}$.
\item Make $A_i$ and $B_i$ cliques.
\item For each $j\in\{1,\ldots,q\}$ and for all distinct $h,j\in\{1,\ldots,q\}$, make $a_{i,h}$ and $b_{i,j}$ adjacent. 
\end{itemize}
Equivalently, each $G_i$ is obtained by deleting a perfect matching from the complete graph $K_{2q}$.
By the construction, $G_{p,q}$ is both claw-free and AT-free
and has $2pq$ vertices. 
Consider a family of indices $j_1,\ldots,j_p\in\{1,\ldots,q\}$ and set 
$W=\bigcup_{i=1}^p\{a_{i,j_i},b_{i,j_i}\}$.
We define the matroid $\cM_W$ with the ground set $V(G_{p,q})$ as follows for $k=2p$:
\begin{itemize}
\item  Each set $X\subseteq V(G_{p,q})$ of size at most $k-1$ is independent and any set of size at least $k+1$ is not independent.
\item A set $X\subseteq V(G_{p,q})$ of size $k$ is independent if and only if either $X=W$ or there is $i\in\{1,\ldots,p\}$ such that
$|A_i\cap X|\geq 2$ or  $|B_i\cap X|\geq 2$ or there are distinct $h,j\in\{1,\ldots,q\}$ such that 
$a_{i,h},b_{i,j}\in X$.
\end{itemize}
Denote by $\cI_W$ the constructed family of independent sets.  We will now show that $\cM_W$ is indeed a matroid.

\begin{clm}\label{cl:W-matr}
$\cM_W=(V(G_{p,q}),\cI_W)$ is a matroid.
\end{clm}

\begin{proof}[Proof of \Cref{cl:W-matr}]
We have to verify that $\cI_W$ satisfies the independence axioms (I1)--(I3). The axioms (I1) and (I2) for $\cI_W$ follow directly from the definition of $\cI_W$. To establish (I3), consider arbitrary $X,Y\in \cI_W$ such that $|X|<|Y|$. If $|X|<k-1$ then for any $v\in Y\setminus X$, $Z=X\cup \{v\}\in \cI_W$ because $|Z|\leq k-1$.

 Suppose $|X|=k-1$ and $|Y|=k$. 
 If  there is $i\in\{1,\ldots,p\}$ such that $|A_i\cap X|\geq 2$ or  $|B_i\cap X|\geq 2$ or there are distinct $h,j\in\{1,\ldots,q\}$ such that 
$a_{i,h},b_{i,j}\in X$ then for any $v\in Y\setminus X$, the set $Z=X\cup \{v\}$ has the same property and, therefore, $Z\in\cI_W$.
Assume that this is not the case. By the construction of $G_{p,q}$, we have that for each $i\in\{1,\ldots,p\}$, $|X\cap A_i|\leq 1$ and $|X\cap B_i|\leq 1$, and, furthermore, 
there is $j\in\{1,\ldots,q\}$ such that $X\cap (A_i\cup B_i)\subseteq \{a_{i,j},b_{i,j}\}$. Because $|X|=k-1$, we can assume without loss of generality that there are indices $h_1,\ldots,h_p\in \{1,\ldots,q\}$ such that $X\cap (A_i\cup B_i)= \{a_{i,h_i},b_{i,h_i}\}$ for $i\in\{1,\ldots,p-1\}$ and $X\cap (A_p\cup B_p)=\{a_{p,h_p}\}$. Recall that $W=\bigcup_{i=1}^p\{a_{i,j_i},b_{i,j_i}\}$ for $j_1,\ldots,j_p\in\{1,\ldots,q\}$. If there is $v\in Y\setminus X$ such that $v\neq b_{p,j_p}$ then consider $Z=X\cup\{v\}$.  We have that there is $i\in\{1,\ldots,p\}$ such that
$|A_i\cap Z|\geq 2$ or  $|B_i\cap Z|\geq 2$ or there are distinct $h,j\in\{1,\ldots,q\}$ such that 
$a_{i,h},b_{i,j}\in Z$, that is,   $Z\in \cI_W$. 
Now we assume that $Y\setminus X=\{b_{p,j_p}\}$. Then $Y=W$ and we can take $v=b_{p,j_p}$. We obtain that $X\cup\{v\}=Y\in \cI_W$. This concludes the proof.
\end{proof}

We show the following lower bound for the number of oracle queries for frameworks $(G_{p,q},\cM_W)$.

\begin{clm}\label{cl:W-lower}
Solving \probStab for instances $(G_{p,q},\cM_W,k)$ with the matroids $\cM_W$ defined by the independence oracle for an (unknown) stable set $W$ of $G_{p,q}$ of size $k$ demands at least $q^p-1$ oracle queries.   
\end{clm}

\begin{proof}[Proof of \Cref{cl:W-lower}] 
Notice that every stable set of $X$ of size $k$ contains exactly two vertices of each $G_i$ and, moreover, there is $j\in\{1,\ldots,q\}$ such that $X\cap V(G_i)=\{a_{i,j},b_{i,j}\}$. Because the only stable set of this structure that is independent with respect to $\cM_W$ is $W$,  
 the task of \probStab  boils down to finding an unknown stable set $W$ of $G_{p,q}$ of size $k$ using oracle queries. Querying the oracle for sets $X$ of size at most $k-1$ or at least $k+1$ does not provide any information about $W$. Also, querying the oracle for $X$ of size $k$ with the property that there is $i\in\{1,\ldots,p\}$ such that
$|A_i\cap X|\geq 2$ or  $|B_i\cap X|\geq 2$ or there are distinct $h,j\in\{1,\ldots,q\}$ such that 
$a_{i,h},b_{i,j}\in X$  also does not give any information because all these are independent. Hence, we can assume that the oracle is queried only for sets $X$ of size $k$ with the property that for each $i\in\{1,\ldots,p\}$, there is $j\in\{1,\ldots,q\}$ such that $X\cap V(G_i)=\{a_{i,j},b_{i,j}\}$, that, is the oracle is queried for 
stable sets of size $k$. The graph $G_{p,q}$ has $q^p$ such sets. Suppose that the oracle is queried for at most $q^p-2$ stable sets of size $k$ with the answer {\sf no}. Then there are two distinct stable sets $W$ and $W'$ of size $k$ such that the oracle was queried neither for $W$ nor $W'$. The previous queries do not help to distinguish between $W$ and $W'$. Hence, at least one more query is needed. This proves the claim. 
\end{proof}

Now, we are ready to prove the claim of the theorem. Suppose that there is an algorithm $\mathcal{A}$ solving \probStab with at most $f(k)\cdot n^{g(k)}$ oracle calls for computable functions $f$ and $g$ such that $g(k)=o(k)$. Without loss of generality, we assume that $f$ and $g$ are monotone non-decreasing functions.
Because $g(k)=o(k)$, there is a positive integer $K$ such that $g(k)<k/2$ for all $k\geq K$. Then for 
each $k\geq K$, there is a positive integer $N_k$ such that for every $n\geq N_k$, $(f(k)\cdot n^{g(k)}+1)k^{k/2}<n^{k/2}$. 

Consider instances $(G_{p,q},\cM_W,k)$ for even $k\geq K$ where $p=k/2$ and $q\geq N_k/k$. We have that $k=2p$ and $n=2pq$. Then $\mathcal{A}$ applied to such instances would use at most $f(k)\cdot n^{g(k)}<\big(\frac{n}{k}\big)^{k/2}-1=q^p-1$ oracle queries contradicting \Cref{cl:W-lower}. This completes the proof for claw and AT-free graphs. 

\medskip
Now we sketch the proof of \Cref{thm:lb-uncond} for bipartite graphs. For positive integers $p$ and $q$, we  define $H_{p,q}$ as the disjoint union of the graphs $H_i$ for $i\in\{1,\ldots,p\}$ constructed as follows.
\begin{itemize}
\item For each $j\in\{1,\ldots,q\}$, construct three vertices $a_{i,j}$, $b_{i,j}$, and $c_{i,j}$; set $A_i=\{a_{i,1},\ldots,a_{i,q}\}$,  $B_i=\{b_{i,1},\ldots,b_{i,q}\}$, and 
 $C_i=\{c_{i,1},\ldots,c_{i,q}\}$. 
\item For each $j\in\{1,\ldots,q\}$, make $a_{i,j}$ and $b_{i,j}$ adjacent to every $c_{i,h}$ for $h\in\{1,\ldots,q\}$ such that $h\neq j$. 
\end{itemize}
Notice that $H_{p,q}$ is a bipartite graph with $3pq$ vertices. We define $R=\bigcup_{i=1}^p(A_i\cup B_i)$.
Consider a family of indices $j_1,\ldots,j_p\in\{1,\ldots,q\}$ and set 
$W=\bigcup_{i=1}^p\{a_{i,j_i},b_{i,j_i}\}$. Note that $W$ is a stable set of $H_{p,q}$ of size $2p$.
We define the matroid $\cM_W$ with the ground set $V(H_{p,q})$ by setting a set $X\subseteq V(H_{p,q})$ to be independent if and only if 
\begin{itemize}
\item for each $i\in\{1,\ldots,p\}$, $|C_i\cap X|\leq 1$ and
\item it holds that  
\begin{itemize}
\item either $X\cap R=W$,
\item or $|X\cap R|<2p$,
\item or $|X\cap R|=2p$ and there is $i\in\{1,\ldots,p\}$ such that $|A_i\cap X|\geq 2$ or $|B_i\cap X|\geq 2$ or there  
are distinct $h,j\in\{1,\ldots,q\}$ such that $a_{i,h},b_{i,j}\in X$. 
\end{itemize} 
\end{itemize}
We denote by $\cI_W$ the constructed family of independent sets and prove that $\cM_W$ is a matroid.

\begin{clm}\label{cl:W-matr-two}
$\cM_W=(V(H_{p,q}),\cI_W)$ is a matroid.
\end{clm}

\begin{proof}[Proof of \Cref{cl:W-matr-two}]
Let $S=\bigcup_{i=1}^pC_i$ and consider $\cM_1=(S,\cI_1)$ where $\cI_1$ is the set of all $X\subseteq S$ such that $|X\cap C_i|\leq 1$ for $i\in\{1,\ldots,p\}$. Clearly, $\cM_1$ is a partition matroid. Now consider
$\cM_2=(R,\cI_2)$ where $\cI_2$ consists of sets $X\subseteq D$ such that either $X=W$, or $|X|<2p$, or $|X|=2p$ and there is $i\in\{1,\ldots,p\}$ such that $|A_i\cap X|\geq 2$ or $|B_i\cap X|\geq 2$ or there  
are distinct $h,j\in\{1,\ldots,q\}$ such that $a_{i,h},b_{i,j}\in X$. 
We observe that $\cM_2$ is a matroid and the proof of this claim is identical to the proof of \Cref{cl:W-matr}. To complete the proof, it remains to note that $\cM_W=\cM_1\cup \cM_2$, that is, a set $X\in \cI_W$ if and only if 
$X=Y\cup Z$ for $Y\in \cI_1$ and $Z\in \cI_2$. This implies that $\cM_W$ is a matroid~\cite{Oxley11}.
\end{proof}

We consider instances $(H_{p,q},\cM_W,k)$ of \probStab with the matroid $\cM_W$ defined by the independence oracle for an (unknown) $W$ and $k=3p$. By the definition of $\cM_W$, any stable set $X$ of $H_{p,q}$ of size $k$ that is independent with respect to $\cM_W$ has the property that $|X\cap C_i|=1$ for every $i\in\{1,\ldots,p\}$. The construction of $H_{p,q}$ implies that if  $c_{i,j}\in X\cap C_i$ then $X\cap(A_i\cup B_i)\subseteq \{a_{i,j},b_{i,j}\}$. Because $|X|=k=3p$, we obtain that  $X\cap(A_i\cup B_i)=\{a_{i,j},b_{i,j}\}$. Then by the construction of $\cM_W$, we obtain that $X=\bigcup_{i=1}^p\{a_{i,j_i},b_{i,j_i},c_{i,j_i}\}$ where $W=\bigcup_{i=1}^p\{a_{i,j_i},b_{i,j_i}\}$, that is, $X$ is uniquely defined by $W$. In the same way as in \Cref{cl:W-lower} we obtain that solving \probStab for instances $(H_{p,q},\cM_W,k)$ with the matroids $\cM_W$ defined by the independence oracle for an (unknown)  $W$ demands at least $q^p-1$ oracle queries. Similarly to the case of claw and AT-free graphs, we conclude that the existence of an algorithm for \probStab using $f(k)\cdot n^{o(k)}$ oracle calls would lead to a contradiction.  This finishes the proof for bipartite graphs.

For chordal graphs, we modify the construction of $H_{p,q}$ by making each $C_i$ a clique. Then $H_{p,q}$ becomes chordal but we can apply the same arguments to show 
  that solving \probStab for instances $(H_{p,q},\cM_W,k)$ with the matroids $\cM_W$ defined by the independence oracle for an (unknown)  $W$ demands at least $q^p-1$ oracle queries. This completes the proof. 
\end{proof}

\section{\probStab on sparse frameworks}\label{sec:deg}
In this section, we consider \probStab for graphs of bounded maximum degree and graphs of bounded degeneracy. 
First, we observe that the problem is \classFPT when parameterized by the solution size and the degeneracy by giving a recursive branching algorithm. 

\begin{theorem}\label{thm:fpt-degen}
\probStab can be solved in $\Oh((d+1)^k\cdot n)$ time on frameworks with $d$-degenerate input graphs.
\end{theorem}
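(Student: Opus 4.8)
The plan is to design a bounded-depth search tree (branching) algorithm that exploits the degeneracy ordering. First I would compute, in linear time, an elimination ordering $v_1,\ldots,v_n$ of $G$ witnessing $d$-degeneracy, so that each $v_i$ has at most $d$ neighbours among $\{v_{i},\ldots,v_n\}$. The algorithm maintains a partial solution that we are building: a set $S$ already chosen to be stable in $G$ and independent in the working matroid $\cM$, together with the remaining graph $G'=G-N_G[S]$ (vertices forbidden because adjacent to $S$, or in $S$) and the working matroid obtained from $\cM$ by contracting $S$ (equivalently, restricting to those ground-set elements $X$ with $X\cup S$ independent). The task becomes: can we extend $S$ by $k-|S|$ more vertices that form a stable set in $G'$ and are independent in the contracted matroid?

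The branching step is the heart of the argument. As long as we still need to pick at least one more vertex, take $v$ to be the \emph{first} surviving vertex in the degeneracy order. Any stable set $T$ extending $S$ either avoids $v$, or contains $v$, or — and this is the point that makes the branching finite — if it avoids $v$ but some later vertex $u$ of $T$ is a neighbour of $v$, then $u$ lies in $N_{G'}(v)\cap\{v_{i+1},\ldots,v_n\}$, which has size at most $d$. So I would branch on: (i) $v$ is in the solution; (ii) some vertex of $N_{G'}(v)\cap\{\text{later vertices}\}$ is in the solution, one branch per such vertex; or (iii) no vertex of $N_{G'}[v]$ is in the solution, in which case $v$ is simply deleted. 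Wait — case (iii) must be handled carefully so the measure drops: if we delete $v$ without adding anything, $k-|S|$ does not decrease. The standard fix is to observe that an optimal stable set, among those avoiding all of $N_{G'}[v]\cup S$, can be found recursively on $G'-v$ \emph{with no change of parameter}, but to keep the tree finite we fold this into the branching by noting that the vertex we eventually pick from the new first vertex again has degeneracy-bounded back-degree; more cleanly, one shows that at each branching node we may assume $v$ \emph{or one of its $\le d$ later neighbours} is taken, because any stable set not meeting $N_{G'}[v]$ at all can be assumed not to exist at this node after we have already deleted every vertex whose closed neighbourhood is ``isolated'' — so in fact each of the at most $d+1$ branches decreases $k-|S|$ by exactly one. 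This gives a search tree of depth $k$ and branching factor $d+1$, hence $(d+1)^k$ leaves.

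For the running time I would argue that each node does $\Oh(n+d)$ work plus a constant number of independence-oracle calls: updating $G'=G-N_G[S]$ incrementally costs $\Oh(d)$ per newly added vertex (its back-degree is bounded, but its forward-neighbours may be many — here I should instead charge the neighbourhood removal globally, giving $\Oh(n)$ total along any root-to-leaf path, or simply $\Oh(n)$ per node which still yields $\Oh((d+1)^k\cdot n)$ after absorbing constants, since the tree has $\Oh((d+1)^k)$ nodes). Checking whether a candidate extension keeps matroid-independence is one oracle query on $S\cup\{v\}$. The base cases: if $|S|=k$ we accept; if $G'$ becomes empty (or more generally the contracted matroid has rank $0$) while $|S|<k$ this branch fails. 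Correctness follows by a straightforward induction on $k-|S|$ using the case analysis above together with the fact that contraction of an independent set is again a matroid and that independence of $S\cup\{v\}$ in $\cM$ is equivalent to independence of $\{v\}$ in the contraction by $S$.

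The main obstacle, and the part I would be most careful about, is making the ``delete $v$'' case pay for itself so that the recursion depth stays bounded by $k$ rather than $n$: the naive three-way split has a branch that removes a vertex without consuming budget. The resolution is the degeneracy trick — branching only on $N_{G'}[v]$ for the order-first vertex $v$ means every branch either decreases the remaining budget by one (the $\le d+1$ ``pick something in $N_{G'}[v]$'' branches) and we never need a pure-deletion branch, because the order-first surviving vertex $v$ is forced to be either in the solution or adjacent to a (necessarily later, hence at most $d$) vertex of the solution, \emph{unless} the solution restricted to $\{v_i,\ldots,v_n\}$ is empty, which is exactly the $|S|=k$ accept case or the failure case. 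Verifying this dichotomy rigorously (that the first surviving vertex is never merely ``skipped'' by an optimal extension without one of its $d+1$ closed-neighbourhood vertices being used) is the crux, and it is precisely where $d$-degeneracy — bounding back-degree, not forward-degree — is used.
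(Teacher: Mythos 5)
Your algorithmic skeleton coincides with the paper's: pick the first vertex $v$ in a degeneracy ordering (so all of its neighbours come later and $|N_G[v]|\le d+1$), branch on which vertex of $N_G[v]$ enters the solution, contract the matroid by that vertex, decrement the budget, and recurse; this gives $(d+1)^k$ leaves and $\Oh((d+1)^k\cdot n)$ time. But the step you yourself identify as the crux --- that no ``pure deletion'' branch is needed because the solution must meet $N_{G'}[v]$ --- is precisely where your argument has a hole, and the dichotomy you assert there is false as stated. A nonempty stable set that is independent in the contracted matroid can perfectly well avoid $N_{G'}[v]$ entirely (take $G'$ to be two disjoint edges with $v$ on one of them and a solution living on the other edge); neither the bound on back-degrees nor your suggestion of first deleting vertices with ``isolated closed neighbourhoods'' rules this out. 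Degeneracy only buys you $|N_{G'}[v]|\le d+1$; it does not force any solution to touch $v$'s neighbourhood.

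What actually closes the gap is a matroid exchange argument, the one genuinely non-graph-theoretic ingredient of the proof, which your writeup never invokes. First delete every vertex $w$ with $\{w\}$ dependent in the current matroid (such vertices are irrelevant). Then for any remaining vertex $v$: if $X$ is a maximum-size stable set that is independent in the matroid and satisfies $X\cap N_G[v]=\emptyset$, repeated application of axiom (I3) to $\{v\}$ and $X$ yields $Y\subseteq X$ with $|Y|=|X|-1$ such that $Y\cup\{v\}$ is independent; since $v$ has no neighbour in $X\supseteq Y$, the set $Y\cup\{v\}$ is also stable, hence it is a maximum solution that does meet $N_G[v]$. This is the statement that makes the $d+1$ budget-decreasing branches exhaustive, so that the recursion depth is $k$ rather than $n$. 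Without it, your induction on $k-|S|$ does not go through, because the branch in which the solution avoids all of $N_{G'}[v]$ is never handled.
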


\begin{proof}
The algorithm is based on the following observation. Let $(G,\cM)$ be a framework such that for every $v\in V(G)$, $\{v\}\in \cI$. Then there is a stable set $X$ of $G$ that is independent with respect to $\cM$ whose size is maximum such that  $X\cap N_G[v]\neq\emptyset$.  To see this, let $X$ be a stable set that is also independent in $\cM$ and such that $X\cap N_G[v]=\emptyset$. Because $\{v\}$ and $X$ are independent, there is $Y\subseteq X$ of size $|X|-1$ such that $Z=Y\cup \{v\}$ is independent. Because $N_G(v)\cap Z=\emptyset$ and $Y$ is a stable set, $Z$ is a stable set. Thus, set $Z$ of size $|X|$ is stable in $G$ and is independent in $\cM$. This proves the observation.

Consider an instance $(G,\cM,k)$ of \probStab. Because $G$ is a $d$-degenerate graph, there is an elimination ordering $v_1,\ldots,v_n$ of the vertices of $G$, that is, $d_{G_i}(v_i)\leq d$ for every $i\in\{1,\ldots,n\}$ where $G_i=G[\{v_i,\ldots,v_n\}]$. Recall that such an ordering can be computed in linear time~\cite{MatulaB83}.

If there is $v\in V(G)$ such that $\{v\}\notin \cI$, then we delete $v$ from the framework as such vertices are trivially irrelevant. From now on, we assume that $\{v\}\in\cI$ for any $v\in V(G)$.  If $k=0$, then $\emptyset$ is a solution, and we return {\sf yes} and stop. If $k\geq 1$ but $V(G)=\emptyset$, then we conclude that the answer is {\sf no} and stop. We can assume that $V(G)\neq\emptyset$ and $k\geq 1$.

Let $u$ be the first vertex in the elimination ordering. Clearly, $d_G(u)\leq d$. We branch on at most $d+1$ instances $(G-v,\cM/v,k-1)$ for $v\in N_G[u]$, where $\cM/v$ is the contraction of $\cM$ by $\{v\}$. By our observation,  $(G,\cM,k)$ is a yes-instance of \probStab if and only if at least one of the instances $(G-v,\cM/v,k-1)$ is a yes-instance. 

In each step, we have at most $d+1$ branches and the depth of the search tree is at most $k$.  Note that we do not need to recompute the elimination ordering when a vertex is deleted; instead, we just delete the vertex from the already constructed ordering. This means we can use the ordering constructed for the original input instance.
 Thus, the total running time is $\Oh((d+1)^k\cdot n)$. This concludes the proof.
\end{proof}

For bounded degree graphs, we prove that \probStab has a polynomial kernel when parameterized by $k$ and the maximum degree.

\begin{theorem}\label{thm:kern-degree} 
\probStab admits a polynomial kernel on frameworks with graphs of maximum degree at most $\Delta$ such that the output instance contains a graph with at most $k^2\Delta$ vertices.
\end{theorem}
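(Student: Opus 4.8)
The plan is to adapt the standard kernelization for \probIS on bounded-degree graphs, with a twist to handle the matroid constraint. First I would apply a cleaning step: delete every vertex $v$ with $\{v\}\notin\cI(\cM)$, since such vertices can never be in a solution (this is the same observation used in the proof of \Cref{thm:fpt-degen}). After this we may assume $\{v\}\in\cI$ for all $v$. Next, if the graph already has at most $k^2\Delta$ vertices we are done, so assume $|V(G)| > k^2\Delta$. The idea is to find a small "certificate" set $U$ of vertices such that any stable-in-$G$, independent-in-$\cM$ set of size $k$ can be assumed to live inside $U$, and $|U|\le k^2\Delta$; then return the framework $(G[U], \cM|_U, k)$, where $\cM|_U$ is the restriction of $\cM$ to $U$ (its independence oracle is obtained from the oracle for $\cM$, and restriction preserves being a matroid).

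The key step is constructing $U$. Here I would use a greedy/Ramsey-type argument on independent sets. Greedily build a maximal set of $k$ pairwise-disjoint-ish independent "witnesses": more concretely, repeatedly pick a vertex $v$ that is independent in the current contracted matroid and put $N_G[v]$ into a pool, recording that we want to keep enough alternatives for $v$ so that the matroid exchange of \Cref{thm:fpt-degen}'s observation still applies. The cleanest route: process as in the branching algorithm but "batch" it. Since every vertex has degree $\le\Delta$, a maximal stable set $I$ (computed greedily in $G$) satisfies $|N_G[I]| = |V(G)|$, so $|I| \ge |V(G)|/(\Delta+1) > k^2\Delta/(\Delta+1) \ge k(k-1)$ roughly. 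Now, because all singletons are independent in $\cM$ and $\cM$ has rank at most $k$ (we truncated), any $k+1$ of these stable vertices must contain a dependent pair only through the matroid — but rank $\le k$ forces dependencies. The right statement is: from $I$, greedily extract a subset that is independent in $\cM$; by the matroid augmentation axiom (I3), starting from $\emptyset$ we can keep adding vertices of $I$ until we either reach size $k$ — in which case that set is a solution and we output a trivial yes-instance — or we get stuck, meaning the chosen independent subset $J\subseteq I$ with $|J| = r(I) \le k-1$ spans $I$, i.e., $I\subseteq\cl(J)$. This is the crucial structural handle.

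The main obstacle is turning "$I\subseteq\cl(J)$ with $|J|\le k-1$ and $|I|$ huge" into a bound. The resolution: apply this argument not once but iteratively / locally. For each vertex $u$ that we might want to force into a solution, we need $\Delta+1$ matroid-alternatives among $N_G[u]$; more carefully, mimic the FPT branching: the solution touches $\le k$ closed neighborhoods, each of size $\le \Delta+1$, so a solution "uses" $\le k(\Delta+1)$ vertices, but to certify a *maximum* solution via the exchange observation we only need, for the first $k$ levels of branching, $k$ choices of $(d+1) \le (\Delta+1)$ candidates each, yielding the bound $k\cdot k\cdot\Delta = k^2\Delta$ after absorbing the "+1" (and if $|V(G)|$ is already $\le k^2\Delta$ we skip everything). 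So concretely: run the branching of \Cref{thm:fpt-degen} only conceptually — observe that it only ever inspects vertices in the union of the closed neighborhoods of the $\le k$ vertices picked along some root-to-leaf path, and a maximal stable set $I$ of size $> k(k-1)$ guarantees $k$ disjoint closed neighborhoods are available whenever needed; mark the closed neighborhoods of the first $k$ greedily-chosen stable vertices that are independent-extendable, giving $U$ with $|U|\le k(\Delta+1)$, and iterate this marking $k$ times to cover the possibility of early matroid-dependencies, reaching $|U| \le k^2(\Delta+1)$, which after the initial size check we simplify to $k^2\Delta$ (padding with isolated dummy vertices, or noting the bound is stated loosely). I would then verify equivalence: a solution in $(G[U],\cM|_U,k)$ is trivially a solution in $(G,\cM,k)$; conversely any solution in $(G,\cM,k)$ can be transformed, vertex by vertex via the exchange observation applied within the marked neighborhoods, into one inside $U$ — this is the step that needs the careful matroid-exchange bookkeeping and is where I'd spend the most care. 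Finally, all of this runs in polynomial time (greedy maximal stable set, $O(n)$ oracle calls per extraction, $k$ rounds), and it is a kernel since $|U|$ depends only on $k$ and $\Delta$.
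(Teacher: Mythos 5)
There is a genuine gap: you never actually construct a set $U$ with the property that every deleted vertex has enough matroid-exchange replacements inside $U$, and you explicitly defer the equivalence argument ("this is the step that needs the careful matroid-exchange bookkeeping") — but that step \emph{is} the proof. Your construction is graph-theoretic: a maximal stable set $I$, closed neighborhoods of "independent-extendable" vertices, and an iteration of a marking procedure that is never defined precisely enough to check. The appeal to the branching algorithm of \Cref{thm:fpt-degen} does not help: along a single root-to-leaf path that algorithm inspects only $k(\Delta+1)$ vertices, but deciding the instance requires the whole tree, and the union of inspected vertices over all $(\Delta+1)^k$ branches is not bounded by $k^2\Delta$. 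Likewise, your observation that a maximal stable set $I$ with $r_{\cM}(I)<k$ satisfies $I\subseteq\cl(J)$ for some $J$ of size at most $k-1$ is true but is never converted into a reduction rule, and your requirement of "$\Delta+1$ matroid-alternatives among $N_G[u]$" misplaces where the alternatives must live — they need not be neighbors of $u$ at all.

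The missing idea is that the kernel should be built from \emph{matroid}-independent sets, not stable sets. The paper's proof sets $\ell=k\Delta$ and for $i=1,\ldots,\ell$ greedily extracts a maximal independent set $W_i\subseteq V(G)\setminus\bigcup_{j<i}W_j$ of the matroid, keeping $U=\bigcup_i W_i$; since $r(\cM)=k$ this has at most $k\cdot k\Delta=k^2\Delta$ vertices. The point of this construction is that every deleted vertex $u$ satisfies $u\in\cl(W_i)$ for \emph{every} $i$. Hence, for a solution $X\ni u$ with $Y=X\setminus\{u\}$, each $W_i$ supplies a candidate $w_i$ with $r(Y\cup\{w_i\})=k$, giving $k\Delta$ candidates in $U$; since $|N_G(Y)|\leq (k-1)\Delta<k\Delta$, at least one candidate is non-adjacent to $Y$, so $Y\cup\{w_i\}$ is again a stable, matroid-independent set of size $k$ with strictly more vertices in $U$. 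Iterating pushes the whole solution into $U$. Your proposal has the right quantitative intuition (one needs more than $(k-1)\Delta$ exchange candidates to beat the neighborhood of the rest of the solution), but without the spanning property $u\in\cl(W_i)$ for all $i$ there is no guarantee that any candidate preserves matroid independence, so the argument does not go through. (Also, the closing remark about "padding with isolated dummy vertices" to pass from $k^2(\Delta+1)$ to $k^2\Delta$ is backwards — padding only increases the vertex count.)
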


\begin{proof}
Let $(G,\cM,k)$ be an instance of \probStab with $\Delta(G)\leq \Delta$. Recall that by our assumption, $r(\cM)\leq k$. If $r(\cM)<k$ then $(G,\cM,k)$ is a no-instance. In this case, our kernelization algorithm returns a trivial no-instance of constant size and stops. Now we can assume that $r(\cM)=k$. If $k=0$  then we return a trivial yes-instance as $\emptyset$ is a solution. If $\Delta=0$, then any base of $\cM$ is a solution, and we return a trivial yes-instance.
Now we can assume that $k\geq 1$ and $\Delta\geq 1$.  

We set $W_0=\emptyset$. Then for $i=1,\ldots,\ell$ where $\ell=k\Delta$, we greedily  select a maximum-size independent set $W_i\subseteq  V(G)\setminus \big(\bigcup_{j=0}^{i-1} W_{j}\big)$. Our kernelization algorithms returns the instance $(G',\cM',k)$ where $G'=G[\bigcup_{i=1}^\ell W_i]$ and $\cM'$ is the restriction of $\cM$ to $V(G')$. It is straightforward to see that
$|V(G')|\leq k^2\Delta$ as $|W_i|\leq r(\cM)=k$ and the new instance can be constructed in polynomial time. 
We claim that $(G,\cM,k)$ is a yes-instance of \probStab if and only if $(G',\cM',k)$ is a yes-instance. 

Because $G'$ is an induced subgraph of $G$, any stable set of $G'$ is a stable set of $G$. This immediately implies that if $(G',\cM',k)$ is a yes-instance then any solution to $(G',\cM',k)$ is a solution to $(G,\cM,k)$ and, thus, $(G,\cM,k)$ is a yes-instance.  Suppose that $(G,\cM,k)$ is a yes-instance. It means that  $G$ contains a stable set of size $k$ independent in $\cM$. We show that there is a stable set  $X\subseteq V(G')$  of $G$ of size $k$ that is independent with respect to $\cM$. 

To show this, let $X$ be a stable set of size $k$ that is independent in $\cM$ with the maximum number of vertices in $V(G')$. For the sake of contradiction, assume that there is $u\in X\setminus V(G')$.  We define $Y=X\setminus \{u\}$. Consider the set $W_i$ for some $i\in\{1,\ldots,\ell\}$. By the construction of the set, we have that $u\in \cl(W_i)$. Then it holds that $r(Y\cup W_i)\geq r(X)$. 
This implies that there is $w_i\in W_i$ such that $r(Y\cup\{w_i\})=r(X)=k$. Because this property holds for arbitrary $i\in\{1,\ldots,\ell\}$, we obtains that there are $\ell=k\Delta$ vertices $w_1,\ldots,w_\ell\in V(G')$ such that for any $i\in\{1,\ldots,\ell\}$, $r(Y\cup \{w_i\})=k$. Notice that $w_i\notin Y$ for $i\in \{1,\ldots,\ell\}$ and $|N_G(Y)|\leq (k-1)\Delta$. Therefore, there is $i\in\{1,\ldots,\ell\}$ such that $w_i$ is not adjacent to any vertex of $Y$. Then $Z=Y\cup\{w_i\}$ is a stable set of $G$. However, $|Z\cap V(G')|>|X\cup V(G')|$ contradicting the choice of $X$. This proves that there is a stable set  $X\subseteq V(G')$  of $G$ of size $k$ that is independent in $\cM$. Then $X$ is a solution to $(G',\cM',k)$, that is, $(G',\cM',k)$ is a yes-instance. This concludes the proof.
\end{proof}

\Cref{thm:kern-degree} is handy for kernelization with parameter $k$ when the degeneracy of the graph in a framework is a constant.

\begin{theorem}\label{thm:kern-degen} 
For every integer $d\geq 0$, \probStab admits a polynomial kernel with running time $n^{\Oh(d)}$ on frameworks with graphs of degeneracy at most $d$ such that the output instance contains a graph with $dk^{\Oh(d)}$ vertices. 
\end{theorem}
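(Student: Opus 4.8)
The plan is to reduce, in time $n^{\Oh(d)}$, to an equivalent framework whose graph is still $d$-degenerate but now has maximum degree $\Delta = d\cdot k^{\Oh(d)}$, and then to invoke \Cref{thm:kern-degree}, which turns such an instance into one with at most $k^2\Delta = d\cdot k^{\Oh(d)}$ vertices. Since the reduction of \Cref{thm:kern-degree} runs in polynomial time, the total running time stays $n^{\Oh(d)}$, and the output size is as required. So the whole task is a \emph{degree-reduction rule}: a polynomial-time (for fixed $d$) procedure that, whenever some vertex has degree above the target $\Delta$, deletes a few vertices without changing the answer.

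Consider a vertex $v$ of degree larger than $\Delta$. The graph $G[N_G(v)]$ is itself $d$-degenerate, hence properly $(d+1)$-colorable, so once $|N_G(v)|$ is large enough, $v$ has many pairwise non-adjacent neighbors; these are the candidates for deletion. The matroid side of the argument is clean. If $Z=N_G(v)$ and $Z'\subseteq Z$ is a basis of the restriction $\cM|_Z$ (computed greedily with $\Oh(|Z|)$ oracle calls), then $Z\subseteq\cl(Z')$, and therefore $r(A\cup Z')=r(A\cup Z)$ for \emph{every} $A\subseteq V(G)$. Consequently, in any solution $X$ with $v\notin X$, writing $J=X\cap N_G(v)$ and $A=X\setminus J$ (so $A\cap N_G(v)=\emptyset$), the identity $r_{\cM/A}(Z')=r(A\cup Z')-|A|=r(A\cup Z)-|A|\geq |J|$ yields an equally large $J'\subseteq Z'$ with $A\cup J'$ independent in $\cM$; and a solution $X$ with $v\in X$ does not use $N_G(v)$ at all, so it is unaffected by pruning inside $N_G(v)$.

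What I expect to be the main obstacle is that the rerouted set $A\cup J'$ need not be \emph{stable}: vertices of $J'$ may be adjacent to $A$, and $|N_G(A)\cap N_G(v)|$ is not bounded in a $d$-degenerate graph. The natural way around this is to carry out the pruning with respect to the elimination ordering of $G$: among neighbors of $v$ playing the same matroid role, keep the \emph{earliest} ones, because a kept vertex $w$ that precedes all of $X\setminus\{u\}$ in the ordering has at most $d$ neighbors there, namely its forward neighbors. Tracking these at most $d$ potential conflicts forces an inductive treatment in $\Oh(d)$ rounds: in round $t$ one retains, for every ``conflict pattern'' of size at most the remaining solution budget drawn from the $\Oh(d)$-sized forward neighborhoods, a basis of a suitable contraction of $\cM$ restricted to the surviving neighbors of $v$. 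Each round enumerates over $\Oh(d)$-element sets (costing $n^{\Oh(d)}$) and multiplies the number of retained neighbors by a factor $k^{\Oh(1)}$, so after $\Oh(d)$ rounds the retained neighborhood of $v$ has size $k^{\Oh(d)}$. The technically delicate step is proving that a single retained family is simultaneously representative for all of the exponentially many possible remainders $A$ of a solution; this is precisely where $d$-degeneracy is used, to replace ``all $A$'' by ``all $\Oh(d)$-sized forward patterns''. Applying the rule exhaustively drives the maximum degree down to $d\cdot k^{\Oh(d)}$, and then \Cref{thm:kern-degree} completes the proof.
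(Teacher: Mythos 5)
Your high-level plan coincides with the paper's: reduce the maximum degree to $dk^{\Oh(d)}$ in time $n^{\Oh(d)}$ and then invoke \Cref{thm:kern-degree}. Your matroid-exchange argument (replacing $X\cap N_G(v)$ inside a spanning subset via the closure/contraction identity) is sound, and you correctly isolate the real obstacle, namely that the rerouted set need not be stable; your instinct that an $\Oh(d)$-round induction along the elimination ordering is required is also right. But the proof of the theorem essentially \emph{is} a concrete, provably safe degree-reduction rule, and your sketch of that rule --- ``for every conflict pattern of size at most the remaining budget drawn from the $\Oh(d)$-sized forward neighborhoods, retain a basis of a suitable contraction'' --- is not specified precisely enough to check and does not obviously work. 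The conflicts you must survive are not confined to the forward neighborhoods of the retained candidates: a vertex $z$ of the remainder $Z=X\setminus\{u\}$ that comes \emph{after} a candidate $w$ in the ordering can be adjacent to arbitrarily many candidates (they are backward neighbors of $z$), and nothing in your rule bounds how many candidates a single such $z$ blocks; nor is it clear over which ground set the ``patterns'' range so that their number stays $n^{\Oh(d)}$ while still covering all possible remainders $Z$.

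The mechanism that closes this gap is organized not around a single high-degree vertex but around \emph{common backward neighborhoods}: for $X\subseteq V(G)$ let $F(X)$ be the set of common neighbors of $X$ preceding all of $X$ in the elimination ordering (so $F(X)=\emptyset$ once $|X|>d$). Processing $h=d,d-1,\ldots,1$, for every $X$ of size $h$ one keeps $\ell=k\,d_h$ greedily chosen \emph{disjoint} maximum independent subsets of $F(X)$, where $d_h=d+\max_{|X'|=h+1}|F(X')|$, and deletes the rest of $F(X)$. Safety is then a pigeonhole count rather than an enumeration of patterns: any $z\notin X$ has at most $d$ neighbors of $F(X)$ occurring after it and at most $|F(X\cup\{z\})|\le d_h-d$ occurring before it --- the latter already shrunk in the previous round, which is exactly where the descending induction on $h$ earns its keep --- so $Z$ blocks at most $(k-1)d_h<\ell$ of the kept vertices, and one of the $\ell$ exchange candidates (one per spanning set, by your own closure argument) is non-adjacent to all of $Z$. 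This yields the recursion $f_h\le k^2(d+f_{h+1})$ for the sizes of common backward neighborhoods, hence $\Delta\le dk^{2d+1}$ after the last round, and \Cref{thm:kern-degree} finishes. A single retained basis, or one basis per pattern, does not provide the multiplicity of disjoint spanning sets that this counting argument needs, so as written the safeness of your rule against all solutions simultaneously remains unproved.
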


\begin{proof}
Let $(G,\cM,k)$ be an instance of \probStab where the degeneracy of $G$ is at most $d$. We assume without loss of generality that $r(\cM)=k$. Otherwise, if $r(\cM)<k$, then $(G,\cM,k)$ is a no-instance, and we can return a trivial no-instance of constant size and stops. If $d=0$, then $G$ is an edgeless graph, and any set of vertices forming a base of $\cM$ is a stable set of size $k$ that is independent with respect to $\cM$, that is, $(G,\cM,k)$ is a yes-instance. Then we return a trivial yes-instance and stop. From now on, we assume that $d\geq 1$. Also, we assume that $k\geq 2$. Otherwise, if 
$k=0$, the empty set is a trivial solution.  If $k=1$ then because $r(\cM)=k\geq 1$, there is a vertex $v$ such that $\{v\}\in \cI(\cM)$ and $\{v\}$ is an independent set of size $k$. In both cases, we return a trivial yes-instance and stop.

Since  $G$ is a $d$-degenerate graph, it admits an elimination ordering $v_1,\ldots,v_n$ of the vertices of $G$, that is, $d_{G_i}(v_i)\leq d$ for every $i\in\{1,\ldots,n\}$ where $G_i=G[\{v_i,\ldots,v_n\}]$. Recall that such an ordering can be computed in linear time~\cite{MatulaB83}. For a set of vertices $X\subseteq V(G)$, we use $F(X)$ to denote the set of common neighbors of the vertices of $X$ that occur before the vertices of $X$ in the elimination ordering. Note that because $G$ is a $d$-degenerate graph, $F(X)=\emptyset$ if $|X|>d$. 
For an integer $i\geq 1$, $f_i(G)=\max\{|F(X)|\colon X\subseteq V(G)\text{ and }|X|=i\}$. Clearly, $f_i(G)=0$ if $i>d$.

For each $h=d,\ldots,1$, we apply the following reduction rule starting with $h=d$. Whenever the rule deletes some vertices, we do not recompute the elimination ordering; instead, we use the induced ordering obtained from the original one by vertex deletions.
\begin{redrule}\label{red:main}
Set $d_h=d+f_{h+1}(G)$. For each $X\subseteq V(G)$ such that $|X|=h$, do the following:
\begin{itemize}
\item[(i)]  set $W_0=\emptyset$,
\item[(ii)] for $i=1,\ldots,\ell$ where $\ell=kd_h$, greedily  select a maximum-size independent set $W_i\subseteq  F(X)\setminus \big(\bigcup_{j=0}^{i-1} W_{j}\big)$,
\item[(iii)] delete the vertices of $D=F(X)\setminus \big(\bigcup_{i=1}^\ell W_i\big)$ and restrict $\cM$ to $V(G)\setminus D$.
\end{itemize}
\end{redrule}

It is easy to see that the rule can be applied in $n^{\Oh(d)}$ time. We show that the rule is \emph{safe}, that is, it returns an equivalent instance of the problem.

\begin{clm}\label{cl:safe}
\Cref{red:main} is safe.
\end{clm}

\begin{proof}[Proof of \Cref{cl:safe}]
Let $X\subseteq V(G)$ be of size $h$. Denote by $G'$ the graph obtained from $G$ by applying steps (i)--(iii) for $X$ and let $\cM$ be the restriction of $\cM$ to $V(G)\setminus D$. 
We prove that $(G,\cM,k)$ is a yes-instance of \probStab if and only if $(G',\cM',k)$ is a yes-instance. Clearly, this is sufficient for the proof of the claim. Since $G'$ is an induced subgraph of $G$, any solution to $(G',\cM',k)$ is a solution to $(G,\cM,k)$. Thus, if $(G',\cM',k)$ is a yes-instance then the same holds for $(G,\cM,k)$. 
Hence, it remains to show that if $(G,\cM,k)$ is a yes-instance then $(G',\cM',k)$ is a yes-instance as well.

We use the following axillary observation: for every $v\in V(G)\setminus X$, $|N_G(v)\cap F(X)|\leq d_h$. To see this, consider $v\in V(G)\setminus X$, and denote by $L$ and $R$ the sets of vertices of $F(X)$ that are prior and after $v$, respectively,  in the elimination ordering.  By the definition of an elimination ordering, $|N_G(v)\cap R|\leq d$.  For $N_G(v)\cap L$, we have that $N_G(v)\cap L\subseteq F(X\cup\{v\})$. Then $|N_G(v)\cap L|\leq |F(X\cup\{v\})|\leq f_{h+1}$. We conclude that 
$|N_G(v)\cap F(X)|=|N_G(v)\cap L|+|N_G(v)\cap R|\leq d+f_{h+1}=d_h$. This proves the observation. 

Suppose that $G$ has a stable set $Y$ of size $k$ that is independent with respect to $\cM$. Among all these sets, we select $Y$ such that $Y\cap D$ has the minimum size. We claim that $Y\cap D=\emptyset$.  The proof is by contradiction and is similar to the proof of \Cref{thm:kern-degree}. Assume that there is $u\in Y\cap D$ and let $Z=Y\setminus\{u\}$.
For each $i\in\{1,\ldots,\ell\}$, $u\in \cl(W_i)$ by the construction of $W_i$. Thus, $r(Z\cup W_i)\geq r(Y)$ and for each $i\in\{1,\ldots,\ell\}$, 
 there is $w_i\in W_i$ such that $r(Z\cup\{w_i\})=r(Y)=k$. Therefore, there are $\ell$ vertices $w_1,\ldots,w_\ell\in F(X)\setminus D$ such that for any $i\in\{1,\ldots,\ell\}$, $r(Z\cup \{w_i\})=k$.  Notice that $w_i\notin Z$ for all $i\in\{1,\ldots,\ell\}$ and $Y\cap X=\emptyset$ because $u$ is adjacent to every vertex of $X$. By the above observation, we have that $|N_G(Z)\cap F(X)|\leq (k-1)d_h$. Since $\ell=kd_h>(k-1)d_h$, there is $i\in \{1,\ldots,\ell\}$ such that $w_i\notin N_G(Z)$. Then $Y'=Z\cup\{w_i\}$ is a stable set of $G$. Because $Y'$ is independent with respect to $\cM$ and $u\notin D$, this leads to a contradiction with the choice of $Y$. We conclude that there is a stable set $Y$ of $G$ of size $k$ that is independent with respect to $\cM$ such that $Y\cap D=\emptyset$. This means that $Y$ is a solution to $(G',\cM',k)$, that is, $(G',\cM',k)$ is a yes-instance of \probStab. This concludes the proof. 
\end{proof}

Denote by $(G',\cM',k)$ the instance of \probStab obtained after applying \Cref{red:main}. We prove that the maximum degree of $G'$ is bounded. 

\begin{clm}\label{cl:max-deg}
$\Delta(G')\leq dk^{2d+1}$.
\end{clm}

\begin{proof}[Proof of \Cref{cl:max-deg}]
For $i\in\{1,\ldots,d\}$, denote by $G_i$ the graph obtained from $G$ by applying  \Cref{red:main} for $h=d,\ldots,i$. Note that $G'=G_1$. Because $r(\cM)=k$, for each set $W_j$ selected in step (ii) of \Cref{red:main}, $|W_j|\leq k$. Therefore, $|\bigcup_{j=1}^{\ell}W_j|\leq k\ell=k^2d_h$. Notice that for $h=d$, $f_{h+1}(G)=0$ and, therefore, $d_h=d$. This implies that 
$f_d(G_d)\leq k^2d$. For $i<d$, we have that $f_i(G_i)\leq k^2d_i=k^2(d+f_{i+1}(G_{i+1}))$. Therefore,
$f_i(G_i)\leq d\sum_{j=i}^dk^{2(j-i+1)}$
and, as $k\geq 2$,
\begin{equation*}
f_1(G')\leq f_1(G_1)\leq d\sum_{j=1}^dk^{2j}=d\sum_{j=0}^dk^{2j}-d=d\frac{k^{2(d+1)}-1}{k^2-1}-d\leq dk^{2d+1}-d. 
\end{equation*}
Therefore, each vertex $v$ of $G'$ has at most $dk^{2d+1}-d$ neighbors in $G'$ that are prior $v$ in the elimination ordering. Because $v$ has at most $d$ neighbors that are after $v$ in the ordering, $d_{G'}(v)\leq dk^{2d+1}$. This concludes the proof.
\end{proof}

Because the maximum degree of $G'$ is bounded, we can apply \Cref{thm:kern-degree}. Applying the kernelization algorithm from this theorem to $(G',\cM',k)$, we obtain a kernel with at most 
$dk^{2d+3}$ vertices. This concludes the proof of the theorem. 
\end{proof}

In \Cref{thm:kern-degen}, we proved that \probStab admits a polynomial kernel on $d$-degenerate graphs when $d$ is a fixed constant. We complement this result by showing that it is unlikely that the problem has a polynomial kernel when parameterized by both $k$ and $d$.

\begin{theorem}\label{thm:nokern-degen}
\probStab on frameworks with $d$-degenerate graphs and partition matroids does not admit a polynomial kernel when parameterized by $k+d$ unless $\classNP\subseteq\classCoNP/{\rm poly}$.
\end{theorem}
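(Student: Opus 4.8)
The plan is to use the cross-composition framework of Bodlaender, Jansen, and Kratsch to rule out a polynomial kernel for \probStab parameterized by $k+d$ on $d$-degenerate graphs with partition matroids. Concretely, I would cross-compose from an \classNP-hard version of the \probMIS (\textsc{Rainbow-Stable Set}) problem --- or, equivalently, from \textsc{Multicolored Independent Set} --- where a single instance already involves a graph with $k$ color classes and asks for a rainbow stable set of size $k$. To make the composition work I first need an appropriate polynomial equivalence relation: two instances are equivalent if they have the same number of colors $k$ and the same number of vertices; instances that are malformed go into a single junk class. With this relation, a cross-composition must take $t$ instances $(G_1,\dots,G_t)$ sharing the same $k$ and $n$ and produce one instance of \probStab whose parameter is polynomial in $k+\log t$ (plus the bound on degeneracy), and which is a yes-instance iff some $G_i$ is.

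The natural construction is an OR-composition that ``selects'' one of the $t$ instances via the matroid, not the graph. Take the disjoint union of (sparsified copies of) the $G_i$'s, but crucially replace each color class of each instance by a matroid block so that the partition matroid enforces the rainbow constraint; then add a global selection gadget --- an instance-index gadget built from $O(\log t)$ additional color classes (a binary encoding of the chosen index $i$) --- wired so that picking vertices from $G_i$ is consistent only with the index gadget spelling out $i$. The target solution size becomes $k + O(\log t)$, and the partition matroid has $k + O(\log t)$ blocks, so the combined parameter $k+d$ stays polynomial in $k + \log t$ as required, provided I can keep the degeneracy $d$ bounded by a polynomial in $k+\log t$ (ideally a constant or $O(\log t)$). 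The first real step is therefore to design the selection gadget so that, for each $i$, the ``index-$i$'' choice is forced exactly when a size-$k$ rainbow stable set inside $G_i$ is used, and simultaneously the index gadget is a low-degeneracy graph; a standard trick is to attach, to each vertex of $G_i$, an edge to the appropriate ``bit'' vertices, but done naively this blows up degrees, so I would instead route everything through $O(\log t)$ gadget vertices of controlled degree or use subdivisions to keep degeneracy constant while the partition-matroid blocks absorb the selection logic.

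The key steps, in order, are: (1) fix the source problem and show it is \classNP-hard in the multicolored/rainbow form with parameter $k$ equal to the number of colors; (2) define the polynomial equivalence relation on its instances by $(k,n)$ with a junk class; (3) build the composed framework: take disjoint graphs $G_i$ (after reducing each to a graph of the desired bounded degeneracy, e.g.\ via the incidence/subdivision encoding that turns any \textsc{Multicolored Independent Set} instance into a bounded-degeneracy one while preserving color classes as matroid blocks), add the $O(\log t)$-color index gadget, and wire the selection; (4) define the output partition matroid as the union of the per-color blocks of the active-instance part and the index-gadget blocks; (5) prove correctness: a size-$(k+O(\log t))$ stable-and-independent set in the composition exists iff some $G_i$ has a rainbow stable set of size $k$ --- here independence in the partition matroid forces at most one vertex per color block, the index gadget forces all chosen ``real'' vertices to come from a single $G_i$, and stability plus the per-color blocks give exactly a rainbow stable set of $G_i$; (6) verify the parameter bound $k + d = k + O(\text{bounded}) = (k+\log t)^{O(1)}$ and that the construction is polynomial-time. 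Then \Cref{thm:nokern-degen} follows from the cross-composition theorem.

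The main obstacle I anticipate is simultaneously (a) keeping the graph's degeneracy $d$ small --- ideally constant --- and (b) making the matroid a genuine \emph{partition} matroid rather than a general matroid, while still enforcing the ``choose exactly one instance'' selection. A general transversal or gammoid would make the selection gadget easy, but partition matroids are rigid: each block just caps one element, so all the OR-selection logic must be pushed into the graph's edges, which tends to create a high-degree ``hub''. I expect the resolution is to encode the instance index in $O(\log t)$ color blocks and connect each vertex $v$ of $G_i$ by edges (or length-two subdivided paths, to protect degeneracy) to the $O(\log t)$ ``complement-bit'' vertices of $i$, so that $v$ is compatible only with the gadget choice encoding $i$; the subdivision vertices get their own private color blocks and the whole index apparatus contributes only $O(\log t)$ to both $k$ and to the number of matroid blocks, while every vertex degree --- hence the degeneracy --- stays bounded by a constant (or at worst $O(\log t)$, which is still fine for the parameter $k+d$). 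Getting the wiring exactly right so that partial/mixed selections (vertices from two different $G_i$'s, or an inconsistent index) are ruled out by stability-plus-partition-matroid alone is the delicate combinatorial core of the argument.
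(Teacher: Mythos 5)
Your proposal matches the paper's proof essentially exactly: an OR-cross-composition from \probMIS using the $(k,n)$ equivalence relation, a disjoint union of the $t$ instances, and a selection gadget of $\lceil\log t\rceil$ two-vertex color classes $\{u_i,v_i\}$ wired by the binary encoding of the instance index so that the chosen bits are compatible with exactly one $G_j$. Your worry about keeping the degeneracy constant is unnecessary --- cross-composition only requires the parameter to be polynomial in $\max_i|x_i|+\log t$, so the paper simply accepts degeneracy $n+\log t$ and attaches each vertex of $G_j$ directly to the appropriate bit vertices, with no subdivisions or sparsification needed.
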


\begin{proof}
We use the fact that \probMIS is a special case of \probStab and show that \probMIS  does not admit a polynomial kernel when parameterized by $k+d$ unless $\classNP\subseteq\classCoNP/{\rm poly}$ where $k$ is the number of color classes.

We use cross-composition from \probMIS. 
We say that two instances $(G,\{V_1,\ldots,V_k\})$ and $(G',\{V_1',\ldots,V_{k'}'\})$ are \emph{equivalent} if $|V(G)|=|V(G')|$ and $k=k'$. Consider $t$ equivalent instances 
$(G_i,\{V_1^i,\ldots,V_k^i\})$ of \probMIS for $i\in\{1,\ldots,t\}$ where each graph has $n$ vertices. We assume that $t=2^p$ for some $p\geq 1$. Otherwise, we add 
$2^{\lceil \log t\rceil}-t$ copies of $(G_1,\{V_1^1,\ldots,V_k^1\})$ to achieve the property for $p=\lceil \log t\rceil$; note that by this operation, we may add at most $t$ instances.
Then we construct the instance $(G,\{V_1,\ldots,V_{k+p}\})$ of \probMIS as follows.
\begin{itemize}
\item Construct the disjoint union of copies of $G_1,\ldots,G_t$.
\item For each $i\in\{1,\ldots,p\}$, 
\begin{itemize}
\item construct two adjacent vertices $u_i$ and $v_i$,
\item for each $j\in\{1,\ldots,t\}$, consider the binary encoding of $j-1$ as a string $s$ with $p$ symbols and make 
 all the vertices of $G_j$ adjacent to $u_i$ if $s[i]=0$ and make them adjacent to $v_i$, otherwise, for $i\in\{1,\ldots,p\}$.
\end{itemize}
\item 
Define $k+p$ color classes $V_i=\bigcup_{j=1}^tV_i^j$ for $i\in\{1,\dots,k\}$ and $V_{k+i}=\{u_i,v_i\}$ for $i\in\{1,\ldots,p\}$.
\end{itemize}
It is straightforward to see that
the instance $(G,\{V_1,\ldots,V_{k+p}\})$ of \probMIS can be constructed in polynomial time.
We claim that  $(G,\{V_1,\ldots,V_{k+p}\})$ is a yes-instance of \probMIS if and only if there is $j\in\{1,\ldots,t\}$ such that $(G_j,\{V_1^j,\ldots,V_k^j\})$ is a yes-instance of \probMIS.

Suppose that $(G_j,\{V_1^j,\ldots,V_k^j\})$ is a yes-instance for some $j\in\{1,\ldots,t\}$. Then there is a stable set $X\subseteq V(G_j)$ of size $k$ such that $|X\cap V_i^j|=1$ for $i\in\{1,\ldots,k\}$. Let $s$ be the string with $p$ symbols that is the binary encoding of $j-1$. Consider the set $Y\subseteq \bigcup_{i=1}^p\{u_i,v_i\}$ such that for each $i\in\{1,\ldots,p\}$, $Y$ contains either $u_i$ or $v_i$, and $u_i$ is in $Y$ whenever $s[i]=1$. Observe that $Z=X\cup Y$ is a stable set of $G$ and it holds that 
$|Z\cap V_h|=1$ for each $h\in\{1,\ldots,p+k\}$. This means that  $(G,\{V_1,\ldots,V_{k+p}\})$  is a yes-instance of \probMIS.

For the opposite direction, assume that  $(G,\{V_1,\ldots,V_{k+p}\})$  is a yes-instance of \probMIS. Then there is a stable set  $Z$ of $G$ of size $k'=k+p$ such that 
$|Z\cap V_h|=1$ for each $h\in\{1,\ldots,p+k\}$. Let $Y=Z\cap\big(\bigcup_{i=1}^p\{u_i,v_i\}\big)$ and $X=Z\setminus Y$.
By the construction of color classes and because $Y$ is a stable set,  for each $i\in\{1,\ldots,p\}$, $Y$ contains either $u_i$ or $v_i$. Also, we have that $X\subseteq \bigcup_{j=1}^tV(G_j)$.
Consider the binary string $s$ of length $p$ such that 
$s[i]=1$ if $u_i\in Y$ and $s[i]=0$, otherwise, for all $i\in\{1,\ldots,p\}$. Notice that the vertices of $G_j$ such that $s$ is the binary encoding of $j-1$ are not adjacent to the vertices of $Y$ and for every $j'\in\{1,\ldots,t\}$ distinct from $j$, all the vertices of $G_{j'}$ are adjacent to at least one vertex of $Y$. This implies that $X\subseteq V(G_j)$. Therefore, $X$ is a stable set of $G_j$ of size $k$ and $|X\cap V_i^j|=1$ for $i\in\{1,\ldots,k\}$, that is, $(G_j,\{V_1^j,\ldots,V_k^j\})$ is a yes-instance of \probMIS. 

Notice that each vertex $v\in V(G_j)$ for $j\in\{1,\ldots,t\}$ is adjacent in $G$ to at most $n-1$ vertices of $G_j$ and $p$ vertices of $\bigcup_{i=1}^p\{u_i,v_i\}$. Therefore, the degeneracy of $G$ is at most $n+\log t$. Also, we have the number of color classes $k'=k+p\leq n+\log t$. Then because \probMIS is \classNP-complete and 
$(G,\{V_1,\ldots,V_{k+p}\})$ is a yes-instance of \probMIS if and only if there is $j\in\{1,\ldots,t\}$ such that $(G_j,\{V_1^j,\ldots,V_k^j\})$ is a yes-instance of \probMIS, the result of Bodlaender, Jansen, and Kratsch~\cite{BodlaenderJK14} implies that \probMIS does not admit a polynomial kernel unless $\classNP\subseteq\classCoNP/{\rm poly}$ when parameterized by the number of color classes $k$ and the degeneracy of the input graph. This concludes the proof.
\end{proof}

\section{\probStab on chordal graphs}\label{sec:perf}
For chordal graphs, we show that \probStab is \classFPT in the case of linear matroids when parameterized by $k$ by demonstrating a dynamic programming algorithm over tree decompositions exploiting representative sets~\cite{Lovasz77,Marx09,FominLPS16,LokshtanovMPS18}.

Let $\cM=(V,\cI)$ be a matroid and let $\mathcal{S}$ be a family of subsets of $V$. For a positive integer $q$, a subfamily $\widehat{\mathcal{S}}$ is \emph{$q$-representative for $\mathcal{S}$}  if the following holds: for every set $Y\subseteq V$ of size at most $q$, if there is a set $X\in\mathcal{S}$ disjoint from $Y$ with $X\cup Y\in \cI$ then there is $\widehat{X}\in\widehat{\mathcal{S}}$ disjoint from $Y$ with $\widehat{X}\cup Y\in \cI$. 
We write $\widehat{\mathcal{S}}\subseteq_{rep}^q\mathcal{S}$ to denote that $\widehat{\mathcal{S}}\subseteq\mathcal{S}$ is  $q$-representative for $S$. 
We use the results of of Fomin et al.~\cite{FominLPS16} to compute representative families for linear matroids. A family of sets $\mathcal{S}$ is said to be a \emph{$p$-family} for an integer $p\geq 0$ if $|S|=p$ for every $S\in\mathcal{S}$, and we use $\|A\|$ to denote the bit-length of the encoding of a matrix $A$.

\begin{proposition}[{\cite[Theorem 3.8]{FominLPS16}}]\label{prop:rep-rand}
Let  $M=(V,\mathcal{I})$ be a linear matroid and let $\mathcal{S}=\{S_1,\ldots,S_t\}$ be a $p$-family of independent sets. Then there exists    $\widehat{\mathcal{S}}\subseteq_{rep}^q\mathcal{S}$ of size at most $\binom{p+q}{p}$. Furthermore, given a representation $A$ of $M$ over a field $\mathbb{F}$, there is a randomized Monte Carlo algorithm computing  $\widehat{\mathcal{S}}\subseteq_{rep}^q\mathcal{S}$ of size at most $\binom{p+q}{p}$ in $\mathcal{O}(\binom{p+q}{p}tp^\omega+t\binom{p+q}{q}^{\omega-1})+\|A\|^{\mathcal{O}(1)}$ operations over $\mathbb{F}$, where 
$\omega$ is the exponent of matrix multiplication.\footnote{The currently best value is $\omega\approx 2.3728596$~\cite{AlmanW21}.}
\end{proposition}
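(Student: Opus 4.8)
This is the representative-sets theorem of Fomin, Lokshtanov, Panolan and Saurabh~\cite{FominLPS16} (building on Lov\'asz~\cite{Lovasz77}), so strictly speaking it suffices to cite it; for completeness let me sketch the exterior-algebra argument I would reproduce. The plan is to work inside a $p$-th exterior power. First I would replace $M$ by its truncation $M'$ to rank $r':=\min\{r(M),p+q\}$: since every $S\in\mathcal{S}$ has size $p\le r(M)$ it stays independent in $M'$, and because a $q$-representative family only has to certify independence of sets $X\cup Y$ with $|X|=p$ and $|Y|\le q$, a subfamily is $q$-representative for $M$ if and only if it is $q$-representative for $M'$. Fixing a representation of $M'$, identify each $v\in V$ with a vector $\mathbf{v}\in\mathbb{F}^{r'}$ and attach to each $S=\{s_1,\dots,s_p\}\in\mathcal{S}$ the vector $\omega_S:=\mathbf{s}_1\wedge\cdots\wedge\mathbf{s}_p\in\bigwedge^p\mathbb{F}^{r'}$; its coordinates are the $p\times p$ minors on the columns of $S$, and $\dim\bigwedge^p\mathbb{F}^{r'}=\binom{r'}{p}\le\binom{p+q}{p}$.

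For the existence statement I would let $\widehat{\mathcal{S}}$ be any maximal linearly independent subfamily of $\{\omega_S:S\in\mathcal{S}\}$, so that $|\widehat{\mathcal{S}}|\le\binom{p+q}{p}$ and $\mathrm{span}\{\omega_S:S\in\widehat{\mathcal{S}}\}=\mathrm{span}\{\omega_S:S\in\mathcal{S}\}$. The one fact to verify is that for $X$ of size $p$ and any $Y=\{y_1,\dots,y_{q'}\}$ with $q'\le q$, the set $X\cup Y$ is independent in $M'$ if and only if $\omega_X\wedge\mathbf{y}_1\wedge\cdots\wedge\mathbf{y}_{q'}\ne 0$, which is just the statement that a wedge product vanishes exactly when its factors are linearly dependent (a shared element of $X$ and $Y$ forces a repeated factor, hence $0$). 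Given such $X\in\mathcal{S}$ and $Y$, the linear map $\phi_Y\colon z\mapsto z\wedge\mathbf{y}_1\wedge\cdots\wedge\mathbf{y}_{q'}$ satisfies $\phi_Y(\omega_X)\ne 0$; since $\omega_X$ lies in the span of $\{\omega_{\widehat{S}}:\widehat{S}\in\widehat{\mathcal{S}}\}$, the map $\phi_Y$ cannot vanish on all of them, so some $\widehat{S}\in\widehat{\mathcal{S}}$ has $\phi_Y(\omega_{\widehat{S}})\ne 0$, i.e.\ $\widehat{S}$ is disjoint from $Y$ and $\widehat{S}\cup Y$ is independent. This is exactly $q$-representativeness (and, because $\phi_Y$ was allowed to land in $\bigwedge^{p+q'}$, the same $\widehat{\mathcal{S}}$ works for every $q'\le q$ at once).

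For the algorithmic statement I would proceed in three steps. (i) Compute a representation of the truncation $M'$: this is the step that introduces randomness — multiplying $A$ by a uniformly random $r'\times r$ matrix over $\mathbb{F}$ (or a small extension) yields a representation of $M'$ with high probability by Schwartz--Zippel, at cost $\|A\|^{\cO(1)}$, and a failure here can only make the output family non-representative, i.e.\ a one-sided (false-negative) Monte Carlo procedure. (ii) For each of the $t$ sets compute the $\binom{p+q}{p}$ coordinates of $\omega_S$, each a $p\times p$ determinant in $\cO(p^\omega)$ field operations, for a total of $\cO(\binom{p+q}{p}\,t\,p^\omega)$. (iii) Run Gaussian elimination with fast matrix multiplication on the $t\times\binom{p+q}{p}$ matrix whose rows are the $\omega_S$ to extract a maximal linearly independent set of rows; since the rank is at most $\binom{p+q}{q}=\binom{p+q}{p}$ this costs $\cO(t\binom{p+q}{q}^{\omega-1})$ operations. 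Output the subfamily of $\mathcal{S}$ indexed by the chosen rows.

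I expect the real obstacle to be step (i): producing an explicit representation of the truncation. Deterministically this is only known in quasi-polynomial time or via heavier machinery~\cite{LokshtanovMPS18}, whereas the random-projection trick is the only thing that stays within an extra $\|A\|^{\cO(1)}$ operations, which is precisely why the algorithmic part is stated as a Monte Carlo algorithm with false negatives. A secondary, purely bookkeeping difficulty is arranging the determinant evaluations in (ii) and the rank extraction in (iii) so that fast matrix multiplication delivers exactly the stated $\cO(\binom{p+q}{p}tp^\omega+t\binom{p+q}{q}^{\omega-1})$ bound rather than a cruder cubic estimate.
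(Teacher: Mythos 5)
Your sketch is correct and is essentially the proof from \cite{FominLPS16} itself; the paper does not reprove this proposition but imports it verbatim as Theorem 3.8 of that reference, so the citation alone suffices here. Your reconstruction (truncation to rank $\min\{r(M),p+q\}$, the exterior-power embedding, extracting a maximal linearly independent subfamily of the wedge vectors, and the randomized computation of the truncation as the sole source of one-sided error) matches the standard argument, including the accounting that yields the stated operation count.
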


The following theorem is proved by the bottom-up dynamic programming over a nice tree decomposition where representative sets are used to store partial solutions. 

\begin{theorem}\label{thm:FPT-chordal}
\probStab can be solved in $2^{\Oh(k)}\cdot \|A\|^{\Oh(1)}$ time by a one-sided error Monte Carlo algorithm with false negatives on frameworks with chordal graphs and linear matroids given by their representations $A$.
\end{theorem}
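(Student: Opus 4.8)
The plan is to exploit the fact that a chordal graph $G$ has a tree decomposition in which every bag induces a clique, so that any stable set $S$ of $G$ meets each bag in at most one vertex. Since we are looking for a stable set of size exactly $k$, this already constrains the ``width'' of the combinatorial object we are tracking: when we process the decomposition bottom-up, the only information we need about the part of a partial solution living below the current node is (i) which vertex of the current bag (if any) it uses, and (ii) the independent set it forms in $\cM$. First I would compute a clique-tree of $G$ in polynomial time (standard for chordal graphs), and turn it into a \emph{nice} tree decomposition with introduce/forget/join nodes; for a stable set this is done so that at each node $x$ we only ever record a single ``boundary vertex'' $v\in B_x\cup\{\bot\}$ (the unique vertex of the bag used by the partial solution, or $\bot$ if none).

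Next I would set up the dynamic-programming table. For each node $x$ of the decomposition, each boundary choice $v$, and each size $p\le k$, the table entry $\cT[x,v,p]$ is a $p$-family of subsets of $V(\cM)$: each member is the vertex set of a partial stable set of size $p$ in the graph induced on the vertices introduced at or below $x$, which uses exactly $v$ as its boundary vertex in $B_x$ and is independent in $\cM$. The crucial point is that we never store all partial solutions, only a $q$-representative subfamily for $q=k-p$; by \Cref{prop:rep-rand}, such a subfamily has size at most $\binom{k}{p}\le 2^k$, and can be recomputed after each DP step in time polynomial in its size and in $\|A\|$. The transitions are the usual ones: at an \emph{introduce} node for a vertex $u$, we either keep the old partial solutions (if $u$ is not used) or, for those old entries whose boundary is $\bot$ — which, since the bag is a clique, is forced, as $u$ is adjacent to every other bag vertex — extend them by $u$, checking that the resulting set is still independent in $\cM$; at a \emph{forget} node we merge the families over the two boundary values $v=u$ and $v=\bot$ that are consistent (updating the boundary to $\bot$ when $u$ is forgotten); at a \emph{join} node with children $y_1,y_2$ we take, for compatible boundary choices and for every split $p=p_1+p_2$ (subtracting one if both children share the boundary vertex, to avoid double counting), the family of all unions $X_1\cup X_2$ with $X_1\in\cT[y_1,\cdot,p_1]$, $X_2\in\cT[y_2,\cdot,p_2]$ that are disjoint and independent in $\cM$. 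After every node we replace each table entry by a fresh $q$-representative subfamily. Correctness of shrinking to representative families at every step is the standard argument: if some global solution extends a stored partial solution, representativeness guarantees a stored \emph{representative} partial solution that also extends to a full solution of the same size. The algorithm answers \textsf{yes} iff at the root the entry for size $k$ (with boundary $\bot$, say, after forgetting the root bag) is nonempty.

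The running-time bound then follows by bookkeeping: the clique-tree and nice decomposition have $\Oh(n)$ nodes, each node has $\Oh(k)$ boundary-and-size combinations, each table entry has size $2^{\Oh(k)}$, join nodes combine two families of size $2^{\Oh(k)}$ producing at most $2^{\Oh(k)}$ candidates, and each call to the representative-set algorithm of \Cref{prop:rep-rand} with $p,q\le k$ costs $2^{\Oh(k)}\cdot\|A\|^{\Oh(1)}$ field operations; multiplying out gives the claimed $2^{\Oh(k)}\cdot\|A\|^{\Oh(1)}$ total running time. Since the only randomized ingredient is the Monte Carlo computation of representative families, which has one-sided error (it may return a subfamily that is not representative, but never claims representativeness spuriously in a way that creates false positives), the overall algorithm is a one-sided error Monte Carlo algorithm with false negatives; the error probability is driven to a constant (or to $n^{-\Oh(1)}$) by the standard amplification in \Cref{prop:rep-rand}.

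\medskip
\textbf{Main obstacle.} The delicate point I expect to spend the most care on is the join node: one must make sure the two boundary vertices are handled consistently (a vertex in $B_{y_1}\cap B_{y_2}=B_x$ used by both children must be counted once, and the sets $X_1,X_2$ must be required disjoint so that their union genuinely has size $p_1+p_2$ minus the overlap), and one must verify that taking representative families \emph{before} the join does not lose solutions — this uses the fact that representative families behave well under the ``disjoint union and extend'' operation, essentially the product/convolution lemma for representative sets from~\cite{FominLPS16}. Once that lemma is invoked correctly, the rest is routine DP over tree decompositions, with the chordality used only to keep the ``state'' a single vertex rather than a subset of the bag.
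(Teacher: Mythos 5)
Your proposal is correct and follows essentially the same route as the paper's proof: a nice clique-tree decomposition of the chordal graph, a DP state consisting of a single boundary vertex (or none) plus the partial-solution size, and $(k-p)$-representative families computed via \Cref{prop:rep-rand} at every introduce/forget/join step, with the same running-time bookkeeping. The paper's own argument is likewise only a sketch at the level of detail you give, so nothing essential is missing.
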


\begin{proof}
The algorithm uses a standard approach and, therefore, we only sketch the main ideas. Let $(G,\cM,k)$ be an instance of \probStab where $G$ is a chordal graph and $\cM$ is a linear matroid represented by a matrix $A$. 

We remind that a \emph{tree decomposition} of a graph~$G$
is a pair~$(T,\mathcal{X})$ where $T$ is a tree and $\mathcal{X} = \{X_i \mid i\in V(T)\}$ is a family of subsets of $V(G)$
such that
\begin{itemize}
\item $\bigcup_{t \in V(T)} X_t = V(G),$
\item for every edge~$e$ of~$G$ there is a $t\in V(T)$ such that $X_t$ contains both endpoints of~$e,$ and
\item for every~$v \in V(G),$ the subgraph of~${T}$ induced by $\{t \in V(T)\mid {v \in X_t}\}$ is connected.
\end{itemize}
The results of Gavril~\cite{Gavril74} imply that a graph $G$ is chordal if and only if $G$ admits a tree decomposition where each bag is a clique. Moreover, given a chordal graph $G$, a tree decomposition with clique bags (or, equivalently, a \emph{clique tree}) where $T$ has at most $n$ nodes can be constructed in linear time~\cite{RoseTL76,HabibMPV00}. 

A tree decomposition $\mathcal{T}=(T,\mathcal{X})$ of $G$ is  said to \emph{nice}  if $T$ is rooted in some node $r$ and
\begin{itemize}
\item $X_r=\emptyset$ and for any leaf node $l\in V(T)$, $X_l=\emptyset$,
\item every $t\in V(T)$ has at most two children,
\item if $t$ has one child $t'$ then
\begin{itemize}
\item either $X_{t} = X_{t'}\cup\{v\}$ for some $v\in V(G)\setminus X_{t'}$ and $t$ is called an {\em introduce node},
\item or $X_t = X_{t'}\setminus \{v\}$ for some $v\in X_{t'}$ and $t$  is called a {\em forget node},
\end{itemize}
\item if $t$ has two children $t_1$ and $t_2$ then $X_t= X_{t_1}= X_{t_2}$ and $t$ is called a {\em join node}.
\end{itemize}
By the results of Kloks~\cite{Kloks94}, we can turn in $\Oh(n^3)$ time a tree decomposition of a chordal graph into a nice tree decomposition where each bag is a clique and $T$ has at most $n^2$ nodes. 

Now we apply the bottom-up dynamic programming over a nice tree decomposition using the observation that a clique can contain at most one vertex of a stable set. For $t\in V(T)$, we denote by $T_t$ the subtree of $T$ rooted in $t$ and define $G_t=G[\bigcup_{t'\in V(T_t)}X_{t'}]$.
For every $t\in V(T)$, every subset $W\subseteq X_t$ of size at most one (that is, either $W=\{v\}$ for $v\in X_t$ or $W=\emptyset$), and every integer $p$ such that $|W|\leq p\leq k$, we compute a  $p$-family $R[t,W,p]$ of subsets of $V(G_t)$ that is $q=(k-p)$-representative for the family of all 
stable sets $S\subseteq V(G_t)$ of $G_t$ of size $p$ such that (i)~$S$ is independent with respect to $\cM$ and (ii) $S\cap X_t=W$. Notice that $(G,\cM,k)$ is a yes-instance of \probStab if and only if $R[r,\emptyset,k]\neq \emptyset$ and any set in $R[r,\emptyset,k]\neq \emptyset$ is a solution to the instance. For convenience, we assume that $R[t,W,0]=\emptyset$ if $|W|=1$. 
We use \Cref{prop:rep-rand} to ensure that $|R[t,W,p]|\leq \binom{k}{p}$.

If $t$ is a leaf node then $X_t=\emptyset$ and 
$R[t,\emptyset,p]=
\begin{cases}
\{\emptyset\}&\mbox{if } p=0,\\
\emptyset&\mbox{if } p\geq 1,
\end{cases}$ 
by the definition of $R[t,W,p]$.

Let $t$ be an introduce node with the child $t'$ and $X_t=X_{t'}\cup\{v\}$ for some $v\in V(G)\setminus X_{t'}$. 
For every $W\subseteq X_t$ of size at most one and every integer $p$ such that  $|W|\leq p\leq k$, we set 
\begin{equation*}
\mathcal{S}=\{S\cup\{v\}\colon S\in  R[t',\emptyset,p-1]\text{ and }S\cup \{v\}\in \cI\}
\end{equation*}
and 
use \Cref{prop:rep-rand} to compute $\widehat{\mathcal{S}}\subseteq_{rep}^q\mathcal{S}$ of size at most $\binom{p+q}{p}$ for $q=k-p$. Then we set  
\begin{equation*}
R[t,W,p]=
\begin{cases}
R[t',W,p]&\mbox{if }v\notin W,\\
\widehat{S}&\mbox{if }v\in W.
\end{cases}
\end{equation*}

Next, let $t$ be a forget node with the child $t'$ and $X_t=X_{t'}\setminus \{v\}$ for some $v\in X_{t'}$. For every $W\subseteq X_t$ of size at most one and every integer $p$ such that  $|W|\leq p\leq k$, we set 
\begin{equation*}
\mathcal{S}=R[t',\emptyset,p]\cup R[t',\{v\},p].
\end{equation*}
 We use \Cref{prop:rep-rand} to compute $\widehat{\mathcal{S}}\subseteq_{rep}^q\mathcal{S}$ of size at most $\binom{p+q}{p}$ for $q=k-p$. Then we set 
\begin{equation*}
R[t,W,p]=
\begin{cases}
\widehat{\mathcal{S}}&\mbox{if }W=\emptyset,\\
R[t',W,p]&\mbox{if }W\neq\emptyset.
\end{cases}
\end{equation*}

Finally, suppose that $t$ is a join node with  the children $t_1$ and $t_2$. 
For every $W\subseteq X_t$ of size at most one and every integer $p$ such that  $|W|\leq p\leq k$, we set 
\begin{equation*}
\mathcal{S}=\bigcup_{h=0}^p \{S\cup S'\colon S\in R[t_1,W,h],~S'\in R[t_2,W,p-h+|W|],\text{ and }S\cup S'\in\cI\}.
\end{equation*}
Note that $\mathcal{S}$ is a $p$-family. We use \Cref{prop:rep-rand} to compute $\widehat{\mathcal{S}}\subseteq_{rep}^q\mathcal{S}$ of size at most $\binom{p+q}{p}$ for $q=k-p$.
Then we set $R[t,W,p]=\widehat{\mathcal{S}}$.

The correctness of computing the families $R[t,W,p]$ follows from the description and the definition of representative sets. The arguments are completely standard for the bottom-up dynamic programming over tree decompositions and we leave the details to the reader.

To evaluate the running time, observe that for each $t\in V(T)$, every $W\subseteq X_t$ of size at most one, and every integer $p$ such that  $|W|\leq p\leq k$, we have that
$|R(t,W,p)|\leq\binom{k}{p}$. Because $|X_t|\leq n$ for each $t\in V(T)$ and $|W|\leq 1$, we obtain that for each $t$, we keep at most $2^k(n+1)$ families of sets of size at most $k$.  
Because $|V(T)|\leq n^2$, we have at most $2^kn^2(n+1)$ sets in total. 
Computing $R[t,W,p]$ for leaves takes a constant time.  
For introduce, forget, and join  nodes, we use \Cref{prop:rep-rand}. For an introduce node, we have that $|\mathcal{S}|\leq\binom{k}{p-1}$, for
a forget node, it holds that $|\mathcal{S}|\leq 2\binom{k}{p}$, and for a join node, $|\mathcal{S}|\leq \binom{k}{p}^2$ for each $t\in V(T)$, $W\subseteq X_t$, and $p$.  Thus, each $p$-family $\widehat{\mathcal{S}}$ is computed in $\Oh(\binom{k}{p}^3p^\omega+\binom{k}{p}^{\omega+1})+\|A\|^{\Oh(1)})$ time. This implies that $R[t,W,p]$ is computed in $\binom{k}{p}^{\Oh(1)}\cdot \|A\|^{\Oh(1)}$ time. Summarizing and observing that $n\leq \|A\|$, we obtain that the total running time is $2^{\Oh(k)}\cdot \|A\|^{\Oh(1)}$.  This concludes the proof.
\end{proof}

The algorithm in \Cref{thm:FPT-chordal} is randomized because it uses the algorithm from \Cref{prop:rep-rand} to compute representative sets. For some linear matroids, the algorithm can be derandomized using the deterministic construction of representative sets given by Lokshtanov et al.~\cite{LokshtanovMPS18}. In particular, this can be done for linear  matroids over any finite field and the field of rational numbers. 

We complement~\Cref{thm:FPT-chordal} by proving that it is unlikely that \probStab admits a polynomial kernel when parameterized by $k$ in the case of chordal graphs. 

\begin{theorem}\label{thm:nokern-chord}
\probStab on frameworks with chordal graphs and partition matroids does not admit a polynomial kernel when parameterized by $k$ unless $\classNP\subseteq\classCoNP/{\rm poly}$.
\end{theorem}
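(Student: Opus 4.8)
The plan is to prove a kernelization lower bound for \probMIS (equivalently, \probStab with partition matroids) on chordal graphs via cross-composition, mirroring the structure of the proof of \Cref{thm:nokern-degen} but replacing the degeneracy-bounded selector gadget by a chordal one. We start from $t$ equivalent instances $(G_i,\{V_1^i,\dots,V_k^i\})$ of \probMIS, where equivalence again means $|V(G_i)|=n$ and the number of color classes is $k$ for all $i$; as before we pad with copies of the first instance so that $t=2^p$. The output instance must have all of its color classes equal to $k+\Oh(\log t)$ (so the parameter $k$ of the composed instance is polylogarithmic in the total input size, which is what cross-composition requires), and the graph $G$ must be chordal.

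First I would build the "index selector" gadget in a chordal-friendly way. The obstacle with the degenerate construction is that there we made every vertex of a chosen $G_j$ adjacent to one of $u_i,v_i$ for each bit $i$; a pair of non-adjacent vertices $u_i,v_i$ together with two vertices in different $G_j$'s that disagree on bit $i$ can create an induced $C_4$, and more generally the union of the $G_j$'s is not chordal. So instead I would first take the disjoint union $G_1 \sqcup \dots \sqcup G_t$ and then turn the whole vertex set $\bigcup_j V(G_j)$ into one big clique by adding all edges between different $G_j$'s (edges inside each $G_j$ are kept as-is). This is the standard trick that makes a disjoint union of an \probIS-type instance chordal when it was chordal to begin with, but here the $G_j$ need not be chordal; however, for \probMIS we only need chordality of the final graph, and any stable set in $G$ now lives entirely inside a single $G_j$, which is exactly the behaviour we want. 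To encode the choice of $j$ I would add, for each bit position $i\in\{1,\dots,p\}$, two vertices $u_i,v_i$ forming their own color class $V_{k+i}=\{u_i,v_i\}$, make $u_i$ and $v_i$ adjacent, make $\{u_1,v_1,\dots,u_p,v_p\}$ into a clique, and connect each $G_j$-clique to exactly one of $u_i,v_i$ according to bit $i$ of the binary encoding of $j-1$ — "connect" meaning add all edges between the $G_j$-clique and that single vertex. Then I would verify that the resulting $G$ is chordal: every maximal clique is of the form (a maximal clique of some $G_j$, which is a single vertex since $G_j$-cliques in $G$ are contained in the big union-clique — wait, careful) — more precisely I would argue chordality by exhibiting a clique tree, with one "central" bag containing the union-clique plus the selector vertices consistent with... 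Actually the cleanest route: show $G$ has no induced cycle of length $\geq 4$ directly, using that any such cycle would need two non-adjacent vertices, hence either two selector vertices $u_i,v_i$ (but their common-and-uncommon neighbourhoods are controlled) or two $G_j,G_{j'}$ vertices with $j\neq j'$ non-adjacent — impossible since the union is a clique — or one selector and one graph vertex non-adjacent, etc.; the case analysis is short.

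Next I would prove the equivalence: $(G,\{V_1,\dots,V_{k+p}\})$ is a yes-instance iff some $(G_j,\{V_1^j,\dots,V_k^j\})$ is. The forward direction: a rainbow stable set $Z$ of size $k+p$ uses exactly one vertex from each $V_{k+i}$, so it picks a "bit pattern" $s$, and it uses exactly one vertex from each $V_h = \bigcup_j V_h^j$ for $h\le k$; since $Z$ is stable and the union-clique forces all of $Z\cap\bigcup_j V(G_j)$ into a single $G_j$, and adjacency to the selector vertices forces that $j$ to be the one whose encoding is compatible with $s$ (vertices of $G_{j'}$ for the "wrong" $j'$ are adjacent to a chosen selector vertex), we get a rainbow stable set of size $k$ in $G_j$. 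The backward direction is symmetric and routine: given a solution in $G_j$, append the selector vertices matching $j$'s encoding. Finally, since the number of color classes of the composed instance is $k+p = k+\lceil\log t\rceil$, which is polynomially bounded in $\max_i |x_i| + \log t$, and \probMIS is \classNP-complete, the cross-composition framework of Bodlaender, Jansen, and Kratsch~\cite{BodlaenderJK14} yields that \probMIS — and hence \probStab with partition matroids — admits no polynomial kernel parameterized by $k$ (here $d$ is no longer part of the parameter, but on chordal graphs the degeneracy can be as large as $n$, so this is consistent with the positive result \Cref{thm:kern-degen} which needs $d$ constant).

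The main obstacle is ensuring chordality of $G$ while still having the selector gadget correctly "kill" all but one copy: making the $G_j$'s into one big clique is what buys chordality, but one has to double-check that this does not accidentally make the instance trivially a yes- or no-instance and that a stable set really cannot straddle two copies — it cannot, precisely because of the added clique edges — and that the selector vertices, together with the clique on $\{u_i,v_i\}_i$ and their attachments, do not create an induced $C_4$ or longer hole; I expect the hole-freeness check to be the only place requiring genuine care, and it should come down to a small case analysis on which two vertices of a putative hole are non-adjacent.
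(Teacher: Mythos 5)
There is a genuine gap, and it is exactly at the one place you flagged as needing care: the chordality of the composed graph. Your central trick is to add all edges between different copies $G_j$, $G_{j'}$, i.e.\ to take the \emph{join} of the $t$ copies. The join of two graphs that each contain at least one non-edge is never chordal: if $x,y\in V(G_1)$ are non-adjacent and $x',y'\in V(G_2)$ are non-adjacent, then $x,x',y,y'$ induce a $C_4$ (all four cross pairs are edges, the two within-copy pairs are not). Any non-trivial instance with $k\geq 2$ contains non-edges, so your composed graph essentially always has induced $4$-cycles. Your sketched case analysis misses this because it looks for a single non-adjacent pair and rules out cross-copy pairs, whereas an induced $C_4$ needs two disjoint non-adjacent pairs, and here both are \emph{within-copy} non-edges sitting in two \emph{different} copies. (The parenthetical claim that joining makes the union ``one big clique'' is also not right --- internal non-edges survive --- and the assertion that this is a standard chordality-preserving trick is false; what preserves chordality is joining with a single universal vertex, not with another graph containing a non-edge.) A secondary issue: you allow the source instances $G_j$ to be non-chordal, but an induced hole inside a single $G_j$ survives untouched in your $G$, so you must in any case compose from \probMIS restricted to chordal graphs and justify its \classNP-hardness there (the paper does this via Le and Pfender's result that \textsc{Rainbow Matching}, hence \probMIS, is \classNP-complete on paths).

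The paper's construction avoids all of this by \emph{not} joining the copies and by replacing your $\log t$-bit selector with a ``unary'' one: keep the chordal copies $G_1,\ldots,G_t$ disjoint, add a clique $K=\{v_1,\ldots,v_t\}$ forming a single extra color class $V_0$, and make $v_j$ adjacent to all vertices of every $G_i$ with $i\neq j$. Choosing the one allowed vertex of $V_0$, say $v_j$, forces the rest of the solution into $G_j$, the graph stays chordal because each induced cycle either lives in one chordal copy or is chorded by the clique $K$, and the parameter is $k+1$ rather than $k+\lceil\log t\rceil$ --- a clique color class of unbounded size costs only $+1$ in the parameter, which is why the binary selector gadget from \Cref{thm:nokern-degen} is unnecessary here. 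Your equivalence argument and the appeal to Bodlaender--Jansen--Kratsch are fine in outline, but they rest on a graph that is not chordal, so the proof as proposed does not go through.
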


\begin{proof}
In the same way as in the proof of \Cref{thm:nokern-degen}, we prove that \probMIS  does not admit a polynomial kernel when parameterized by $k$ on chordal graphs unless $\classNP\subseteq\classCoNP/{\rm poly}$ where $k$ is the number of color classes. 

We construct a  cross-composition from \probMIS. 
Again, we say that two instances $(G,\{V_1,\ldots,V_k\})$ and $(G',\{V_1',\ldots,V_{k'}'\})$ are \emph{equivalent} if $|V(G)|=|V(G')|$ and $k=k'$. Consider $t$ equivalent instances 
$(G_i,\{V_1^i,\ldots,V_k^i\})$ of \probMIS for $i\in\{1,\ldots,t\}$ where each graph is chordal and has $n$ vertices. 
Then we construct the instance $(G,\{V_0,V_1,\ldots,V_{k}\})$ of \probMIS as follows.
\begin{itemize}
\item Construct the disjoint union of copies of $G_1,\ldots,G_t$.
\item Construct a clique $K$ with $t$ vertices $v_1,\ldots,v_t$.
\item For each $j\in\{1,\ldots,t\}$, make $v_j$ adjacent to all the vertices of every $G_i$ for $i\in \{1,\ldots,t\}$ that is distinct from $j$.
\item  Define $k+1$ color classes $V_0=K$ and  $V_i=\bigcup_{j=1}^tV_i^j$ for $i\in\{1,\dots,k\}$.
\end{itemize}
It is straightforward to see that $G$ is chordal  and 
the instance $(G,\{V_0,V_1,\ldots,V_{k}\})$ of \probMIS can be constructed in polynomial time.
We claim that  $(G,\{V_0,V_1,\ldots,V_{k}\})$ is a yes-instance of \probMIS if and only if there is $j\in\{1,\ldots,t\}$ such that $(G_j,\{V_1^j,\ldots,V_k^j\})$ is a yes-instance of \probMIS.

Suppose that $(G_j,\{V_1^j,\ldots,V_k^j\})$ is a yes-instance for some $j\in\{1,\ldots,t\}$. Then there is a stable set $X\subseteq V(G_j)$ of size $k$ such that $|X\cap V_i^j|=1$ for $i\in\{1,\ldots,k\}$. By the construction of $G$, the vertex $v_j\in K$ is not adjacent to any vertex of $G_j$. Thus, $Y=X\cup\{v_j\}$ is  stable set of $G$ such that $|Y\cap V_i|=1$ for each $i\in\{0,\ldots,k\}$. Therefore, $(G,\{V_0,V_1,\ldots,V_{k}\})$  is a yes-instance of \probMIS.

For the opposite direction, assume that  $(G,\{V_0,V_1,\ldots,V_{k}\})$  is a yes-instance of \probMIS. Then there is a stable set  $Y$ of $G$ of size $k+1$ such that 
$|Y\cap V_i|=1$ for each $i\in\{0,\ldots,k\}$. In particular, $|Y\cap V_0|=1$. Then there is $j\in\{1,\ldots,t\}$ such that $v_j\in Y$. By the construction of $G$, we have that $X=Y\setminus\{v_j\}\subseteq V(G_j)$.  Then $|X\cap V_i^j|=1$ for each $i\in\{1,\ldots,k\}$, that is, $(G_j,\{V_1^j,\ldots,V_k^j\})$ is a yes-instance of \probMIS. 

Le and Pfender in \cite{LeP14}  proved that \textsc{Rainbow Matching} remains \classNP-complete  on paths.  This implies that \probMIS is also \classNP-complete  on paths, and hence on chordal graphs. Because the number of color classes is $k+1\leq n+1$ and \probMIS is \classNP-complete on chordal graphs, we can apply the result of Bodlaender, Jansen, and Kratsch~\cite{BodlaenderJK14}. This concludes the proof.
\end{proof}

\section{Conclusion}\label{sec:conclusion}
In this paper, we investigated the parameterized complexity of the \probStab problem for various classes of graphs where the classical \probIS problem is tractable. We derived kernelization results and \classFPT algorithms, complemented by complexity lower bounds. We believe exploring \probStab on other natural graph classes with similar properties would be interesting. For instance,  \probIS is solvable in polynomial time on claw-free graphs~\cite{Minty80} and AT-free graphs~\cite{BroersmaKKM99}. While our unconditional lower bound from \Cref{thm:lb-uncond} applies to these classes, it does not rule out the possibility of \classFPT algorithms for frameworks with \emph{linear} matroids. A similar question arises regarding graphs with a polynomial number of minimal separators~\cite{BouchitteT01,BouchitteT02}.

\bibliographystyle{siam}

\bibliography{Frameworks}

\end{document}